\newcommand\@shorttitle{}
\newcommand\shorttitle[1]{\renewcommand\@shorttitle{#1}}
\renewenvironment{abstract}{
  \centerline
  {\large\sffamily\bfseries Abstract}\vspace{-1em}
  \begin{quote}\small
}{
  \end{quote}
}
\newcommand{\assumplabel}[2]{%
   \protected@write \@auxout {}{\string\newlabel{#1}{{#2}{\thepage}{#2}{#1}{}}}%
   \hypertarget{#1}{#2}%
}
\newenvironment{assump}[2][]{%
\par\medskip%
\def\tempa{}\def\tempb{#1}%
\lowercase{\def\lblkey{a-#2}}%
\textbf{Assumption \assumplabel{\lblkey}{#2}}%
\ifx\tempa\tempb\else~(#1)\fi%
\textbf{.}\itshape
}{\normalfont}
\providecommand{\tightlist}{%
  \setlength{\itemsep}{0pt}\setlength{\parskip}{0pt}}\usepackage{longtable,booktabs,array}
\patchcmd\longtable{\par}{\if@noskipsec\mbox{}\fi\par}{}{}
\def\maxwidth{\ifdim\Gin@nat@width>\linewidth\linewidth\else\Gin@nat@width\fi}
\def\maxheight{\ifdim\Gin@nat@height>\textheight\textheight\else\Gin@nat@height\fi}
\def\fps@figure{htbp}
\newcommand{\proj}{\mathrm{proj}}
\providecommand{\norm}[1]{\lVert#1\rVert}
\newcommand{\Categorical}{\mathrm{Categorical}}
\newcommand{\notindep}{\mathbin{\perp\!\!\!\!\not\!\:\perp}}
\newcommand{\cS}{\mathcal{S}}
\newcommand{\cR}{\mathcal{R}}
\newcommand{\cG}{\mathcal{G}}
\newcommand{\cX}{\mathcal{X}}
\newcommand{\cY}{\mathcal{Y}}
\newcommand{\cW}{\mathcal{W}}
\newcommand{\cI}{\mathcal{I}}
\theoremstyle{plain}
\newtheorem{prop}{Proposition}[section]
\newtheorem{theorem}[prop]{Theorem}
\newcommand\numberthis{\addtocounter{equation}{1}\tag{\theequation}}
\newcommand{\R}{\ensuremath{\mathbb{R}}}
\newcommand{\half}{\frac{1}{2}}
\newcommand{\eps}{\varepsilon} 
\DeclareMathOperator*{\argmax}{arg\,max}
\DeclareMathOperator{\E}{\mathbb{E}}
\let\Pr\undefined
\DeclareMathOperator{\Pr}{\mathbb{P}}
\newcommand{\indep}{\mathbin{\perp\!\!\!\!\!\:\perp}} 
\newcommand{\ind}{\mathbf{1}}
\newcommand{\iid}{\mathrel{\stackrel{iid}{\sim}}} 
\newcommand{\cvas}{\xrightarrow{\;\!a.s.\:\!}}
\DeclareMathOperator{\Cov}{Cov}
\newcommand{\Dirichlet}{\mathrm{Dirichlet}\qty}
\newcommand{\Norm}{\mathcal{N}\qty}
  \renewcommand*\contentsname{Table of contents}
  \newcommand\contentsname{Table of contents}
  \renewcommand*\listfigurename{List of Figures}
  \newcommand\listfigurename{List of Figures}
  \renewcommand*\listtablename{List of Tables}
  \newcommand\listtablename{List of Tables}
  \renewcommand*\figurename{Figure}
  \newcommand\figurename{Figure}
  \renewcommand*\tablename{Table}
  \newcommand\tablename{Table}
\newenvironment{CSLReferences}[2]{
\bibliography{references.bib}
\clearpage
}{}
\title{\sffamily\bfseries\huge\parfillskip=0pt
\rightskip=0pt plus .5\textwidth
\leftskip=0pt plus .5\textwidth
\emergencystretch=.3\textwidth Estimating Racial Disparities When Race
is Not Observed}
\shorttitle{Estimating Racial Disparities When Race is Not Observed}
\author{\textbf{Cory McCartan}
\\Center for Data Science%
\\New York University%
\vspace{2pt}
 \and \textbf{Robin Fisher}
\\Office of Tax Analysis%
\\U.S. Department of the Treasury%
\vspace{2pt}
 \and \textbf{Jacob Goldin}
\\Law School%
\\University of Chicago%
\vspace{2pt}
 \and \textbf{Daniel E. Ho}
\\Department of Political Science%
\vspace{2pt}
\\Department of Computer Science%
\vspace{2pt}
\\Law School%
\\Stanford University%
\vspace{2pt}
 \and \textbf{Kosuke Imai}\footnote{
To whom correspondence should be addressed.
Email: \texttt{\href{mailto:imai@harvard.edu}{imai@harvard.edu}}.
Website: \url{https://imai.fas.harvard.edu}.
We thank Bruce Willsie, CEO of L2, Inc., for providing us with the
geocoded voter file we use in this paper. We also thank Hiroto
Katsumata, Soichiro Yamauchi, and an anonymous reviewer of the Alexander
and Diviya Magaro Peer Pre-Review Program for useful feedback. The views
expressed in this paper are those of the authors and do not necessarily
represent the views of the US Treasury Department. Any taxpayer data
used in this research was kept in a secured IRS data repository, and all
results have been reviewed to ensure that no confidential information is
disclosed.}
\\Department of Government%
\vspace{2pt}
\\Department of Statistics%
\\Harvard University%
\vspace{2pt}
 }
\date{April 16, 2024}
\begin{document}
\allsectionsfont{\sffamily}

\maketitle

\begin{abstract}
The estimation of racial disparities in various fields is often hampered
by the lack of individual-level racial information. In many cases, the
law prohibits the collection of such information to prevent direct
racial discrimination. As a result, analysts have frequently adopted
Bayesian Improved Surname Geocoding (BISG) and its variants, which
combine individual names and addresses with Census data to predict race.
Unfortunately, the residuals of BISG are often correlated with the
outcomes of interest, generally attenuating estimates of racial
disparities. To correct this bias, we propose an alternative
identification strategy under the assumption that surname is
conditionally independent of the outcome given (unobserved) race,
residence location, and other observed characteristics. We introduce a
new class of models, Bayesian Instrumental Regression for Disparity
Estimation (BIRDiE), that take BISG probabilities as inputs and produce
racial disparity estimates by using surnames as an instrumental variable
for race. Our estimation method is scalable, making it possible to
analyze large-scale administrative data. We also show how to address
potential violations of the key identification assumptions. A validation
study based on the North Carolina voter file shows that BISG+BIRDiE
reduces error by up to 84\% when estimating racial differences in party
registration. Finally, we apply the proposed methodology to estimate
racial differences in who benefits from the home mortgage interest
deduction using individual-level tax data from the U.S. Internal Revenue
Service. Open-source software is available which implements the proposed
methodology.
\end{abstract}

\textbf{\textit{Keywords}}\quad race
imputation~\textbullet~BISG~\textbullet~ecological
inference~\textbullet~instrumental variable~\textbullet~proxy variable


\newpage

\section{Introduction}\label{introduction}

The identification and estimation of racial disparities is of paramount
importance to researchers, policymakers and organizations in a variety
of areas including public health
\citep{van2003paved, williams2005social}, employment
\citep{conway1983reverse, greene1984reverse}, voting
\citep{gay2001effect, hajnal2005turnout, barreto2007isi}, criminal
justice
\citep{berk2021fairness, chouldechova2017fair, dressel2018accuracy},
economic policy and taxation
\citep{brown2022whiteness, elzayn2023measuring}, housing
\citep{kermani2021racial}, lending \citep{chen2018fair}, and technology
and fairness \citep{alao2021meta}. Within the U.S. government, efforts
to identify and remedy racial disparities have taken on greater urgency
with the recent issuance of Executive Order 13985, which in part directs
agencies to conduct equity assessments by developing appropriate
methodology.

In many of these areas, however, racial information is not available at
the individual level. The unavailability of individual racial
information makes it impossible for analysts to simply tabulate
variables of interest against race to identify disparities among
different racial groups. In fact, in some areas, the law explicitly
prohibits the collection of racial information even as it demands fair
treatment on the basis of race (see, e.g., the U.S. Equal Credit
Opportunity Act). This creates a dilemma for organizations who wish to
measure possible disparities in order to monitor the fairness of their
decision-making or service provision.

To estimate racial disparities without individual-level racial data,
some researchers have turned to ecological inference methods
\citep{goodman1953ecological, king1997solution, king2004ecological, wakefield2004ecological, imai2008bayesian, greiner2009r}.
These methods, however, require strong assumptions, which can be
difficult to verify and may provide misleading results
\citep{cho2008ecological}. Additionally, they all rely on accurate
marginal information about race, which may not always be available.

Where the analysis of racial disparities involves large-scale
administrative data, many analysts have adopted Bayesian Improved
Surname Geocoding (BISG), which generates individual probabilities of
belonging to different racial groups using Bayes' rule applied to last
names and geographic location
\citep{fiscella2006bisg, elliott2008bisg, imai2016improving}. BISG
leverages residential racial segregation and the association between
self-reported race and surname to produce generally accurate and
calibrated predictions of self-reported individual race
\citep{kenny2021das, deluca2022validating}.

Much attention has been given to ways of increasing the accuracy of BISG
and related methods for race \emph{prediction}
\citep{voicu2018, zrp, argyle2022misclassification, imai2022addressing, decter2022, greengard2023bisg}.
Unfortunately, broadly accurate BISG racial prediction alone does not
guarantee the unbiased estimation of racial \emph{disparities}, which is
the ultimate goal of most analysts. To estimate disparities, BISG
probabilities (or any other racial predictions) must be combined with
information on the outcome variable for which the disparities are of
interest. But the most common techniques for doing so are known to be
biased when race is correlated with the outcome even after controlling
on name and location
\citep{chen2019fairness, argyle2022misclassification}. These approaches
include \emph{weighting} the outcome variable by the BISG probabilities,
and \emph{thresholding} the BISG probabilities to produce point
estimates of individual race (e.g., imputing ``Black'' as the race for
an individual with a 61\% probability of being Black according to BISG).

In fact, these methods often \emph{underestimate} the true magnitude of
racial disparities, which is problematic for policymakers and analysts
who aim to identify and reduce these disparities. As formally discussed
in Section~\ref{sec-id}, the standard methods of racial disparity
estimation based on BISG predictions implicitly require individuals'
race to be conditionally independent of the outcome given their
residence location, surnames, and other observable attributes. This key
identification assumption, however, is unlikely to hold because race
affects many aspects of society even after accounting for residence
location, surnames, and other observable attributes. Some researchers
have noted the implausibility of this assumption and have proposed
methods to address it, but these alternative approaches are more general
than the racial disparity setting and require additional data such as a
partially labeled subset \citep{fong2021machine}. Others have advocated
for partial identification strategies
\citep{kallus2021assessing, elzayn2023measuring}.

To address this challenge, in Section~\ref{sec-est}, we propose an
alternative identification strategy. Specifically, we assume that the
outcome is conditionally independent of surname given (unobserved)
individual's race, residence location, and other observed attributes.
This assumption is a type of exclusion restriction where surname serves
as an instrumental variable for unobserved race. It implies that for two
individuals who live in the same area, belong to the same racial group,
and share the observable attributes, their surnames have no predictive
power of the outcome. Somewhat counter-intuitively, the
high-dimensionality of surnames aids rather than hinders identification
because it provides a large number of instruments. We argue that this
new identification assumption is more credible than the commonly invoked
assumption unless surname is directly used to determine the outcome of
interest (i.e., name-based discrimination).

Leveraging this identification strategy, in Section~\ref{sec-birdie} we
introduce a new class of models, Bayesian Instrumental Regression for
Disparity Estimation (BIRDiE), that accurately estimates racial
disparities using BISG probabilities. Beyond accuracy, BIRDiE improves
on standard methodology in several ways:

\begin{itemize}
\tightlist
\item
  BIRDiE includes built-in flexibility for researchers to make
  problem-specific modeling choices (Section~\ref{sec-birdie}).
\item
  BIRDiE can be fit with an EM algorithm that can scale to hundreds of
  thousands or millions of observations (Section~\ref{sec-compute}).
\item
  BIRDiE produces updated BISG probabilities that incorporate the
  outcome variable and are likely to be more accurate than the BISG
  probabilities based only on surnames and geolocation
  (Section~\ref{sec-update-bisg}).
\item
  BIRDiE can be used iteratively to condition on additional variables
  whose distribution by race is not known \emph{a priori}
  (Section~\ref{sec-addlcov}). For example, party identification can be
  estimated by race \emph{and} turnout.
\end{itemize}

Finally, in Section~\ref{sec-sens} we address potential violations of
the key identification assumption, such as the one caused by overly
coarse racial categories, by exploiting auxiliary information about the
relations between names and more specific ethnic groups. All of the
proposed methodology is implemented in a computationally efficient
open-source software package, \texttt{birdie}, that is made available
with the paper. The software and accompanying documentation are
available at \url{https://corymccartan.com/birdie/}.

In Section~\ref{sec-valid}, we validate the proposed methodology using
real-world data taken from the voter file in North Carolina, where
self-reported individual-race is observed and can be used to construct
the ground-truth of racial disparities. BIRDiE substantially outperforms
existing estimators across different error measures and multiple levels
of geolocation specificity. For example, the most popular existing
BISG-only disparity estimator pegs the gap at Democratic party
registration between White and Black voters at 24.1 percentage points,
while the actual gap is 54.6 percentage points---more than double. Our
preferred BIRDiE model using the same BISG probabilities yields an
estimate of 48.5 percentage points. This represents about a 80\%
reduction in bias.

In Section~\ref{sec-irs}, we apply BIRDiE to large-scale administrative
tax data from the U.S. Internal Revenue Service (IRS), which does not
collect individual taxpayers' racial information. We produce the novel
estimates of the distribution by race in who claims the home mortgage
interest deduction---a question that has been largely hampered by a lack
of administrative tax data with taxpayers' racial information. The
results show that there exists a substantial degree of racial disparity
with many fewer Black and Hispanic filers claiming the HMID than White
and Asian filers. We find that the racial gaps in homeownership rates
alone cannot explain this disparity. Section~\ref{sec-disc} gives
concluding remarks.

\section{Bias of the Standard Methodology}\label{sec-id}

In this section, we review the assumptions of the standard BISG-based
methodology for estimating racial disparities when individual race is
not observed. We show that the racial disparity estimates based on the
standard methodology are biased unless the outcome variable is
independent of race given surname, residence location, and other
observed covariates. We argue that this assumption is likely to be
violated given the significant role race plays in many aspects of our
society.

\subsection{Setup and BISG Procedure}\label{setup-and-bisg-procedure}

Suppose that we have an i.i.d. sample of \(N\) individuals from a
population of infinite size. For each individual \(i=1,2,\ldots,N\), we
define a tuple \((Y_i, R_i, G_i, X_i, S_i)\), where \(Y_i\in\cY\) is the
outcome of interest for individual \(i\), \(R_i\in \cR\) is the
(unobserved) race of the individual, \(G_i\in\cG\) is the (geo)location
of the individual's residence, \(X_i\in\cX\) are other observed
characteristics, and \(S_i\in\cS\) is the individual's surname. When we
are not referring to a particular individual, we will drop the
subscripts for notational simplicity. Note that individual race is
unobservable but all other variables are assumed to be observed. The
availability of particular (or any) \(X\) is not required for either the
standard or proposed methodology.

We assume throughout that these variables are discrete, taking a finite
set of values, i.e., \(|\cY|\), \(|\cR|\), \(|\cG|\), \(|\cX|\), and
\(|\cS|\) are constants. Note that typically \(S\) is high-dimensional
as there exist a large number of unique surnames. In practice, residence
location \(G\) is also discrete, since joint information about location,
race, and other variables is generally only available down to the Census
block level. For the sake of simplicity, we assume that the outcome
variable \(Y\) is also discrete, though it is possible to extend the
standard and proposed methodologies to continuous outcome variables.

Typically, BISG relies on data from the decennial Census or the American
Community Survey (ACS), which provide information on the joint
distribution of \(R\) and \(G\) (and any other covariates \(X\), such as
gender or age). It then combines this information with data from the
Census Bureau's surname tables \citep{censusnames}, which provide
information on the joint distribution of \(R\) and \(S\). We summarize
this set of information from the Census by two conditional
probabilities, \(\vb q_{GX|R}\) and \(\vb q_{S|R}\), and one marginal
probability, \(\vb q_R\).

The BISG estimator of the probability that individual \(i\) belongs to
race \(r \in \cR\) can then be written as
\citep{fiscella2006bisg, elliott2008bisg}
\begin{equation} \label{eq:pr-bisg}
    \hat{P}_{ir} \coloneq \frac{q_{G_iX_i|r}\, q_{S_i|r}\, q_r}{
        \sum_{r^\prime\in\cR}q_{G_iX_i|r^\prime}\, q_{S_i|r^\prime}\, q_{r^\prime}},
\end{equation} where, for example, \(q_{G_iX_i|r}\) indicates the
estimated conditional probability of residence location \(G_i\) and
covariates \(X_i\) given race \(r\), taken from the Census table
\(\vb q_{GX|R}\).

The BISG estimator relies on two key assumptions. The first is that the
Census tables reflect the true population distributions of \(R\), \(S\),
\(G\), and \(X\).

\begin{assump}[Data accuracy]{ACC}
    For all $i$, 
    \begin{align*}
        \Pr(R_i=r) &= q_r \\
        \Pr(S_i=s\mid R_i = r) &= q_{s|r} \\
        \Pr(G_i=g, X_i=x\mid R_i = r) &= q_{gx|r}
    \end{align*}
\end{assump}

Despite the best efforts of the Census Bureau, Assumption \ref{a-acc}
may never hold exactly in practice. The decennial census has intrinsic
error, including undercounting minority racial groups
\citep{census2022undercount, censuscount, racecounts}, as well as error
introduced by privacy-preserving mechanisms
\citep{abowd2020, mccartan2023nmf}. And because of births, deaths, and
moves, census data are often out-of-date from the moment of publication.
These errors have led further extensions of the BISG estimator to
account for some of this measurement error \citep{imai2022addressing},
which can help with accuracy for smaller racial groups. The plausibility
of Assumption \ref{a-acc} is stretched even further when the study
population is a subset of the whole U.S. population, and so is not
covered by national census data. In these cases, analysts should set
\(\vb q_R\) to the known or estimated marginal racial distribution in
the study population, rather than the national racial distribution, when
this margin in known. It may be more plausible then to assume that the
conditional distributions \(\Pr(S\mid R)\) and \(\Pr(G,X\mid R)\) match
the census distributions, even if \(\Pr(R)\) does not
\citep{rosenman2023firstname}. \citet{greengard2023bisg} take this
approach a step further by by raking BISG probabilities to all known
margins, further improving calibration.

The second assumption required by BISG is the following conditional
independence relation between an individual's surname and residence
location (as well as other characteristics) given their unobserved race.

\begin{assump}[Conditional independence of name and other proxy variables]{CI-SG}
    For all $i$, $$S_i \indep \{G_i, X_i\}\mid R_i.$$
\end{assump}

Assumption \ref{a-ci-sg} implies, for example, that once we know an
individual is White, knowing their surname is Smith tells us nothing
about their residence location and other observed characteristics.
Although this assumption appears to be reasonable, the lack of
granularity in the coding of race may lead to its violation. For
example, people with Chinese, Indian, Filipino, Vietnamese, Korean, or
Japanese are all coded as one racial group ``Asian'' in the census.
These groups, however, have varying sets of surnames and have different
demographic and geographic distributions. For instance, unlike the Smith
example, knowing that an Asian individual's surname is Gupta makes it
more likely that they have a higher income and live in the Eastern U.S
\citep{budiman2019key}.

Even though Assumptions \ref{a-ci-sg} and \ref{a-acc} may not hold
exactly, researchers find that BISG produces accurate and generally
well-calibrated estimates in practice
\citep{imai2016improving, zhang2018bisg, kenny2021das, deluca2022validating}.
We observe this pattern as well in the validation study in
Section~\ref{sec-valid}.

Under Assumption \ref{a-ci-sg}, by Bayes' Rule, \[
    \Pr(R_i=r\mid G_i,X_i,S_i)
    \propto \Pr(G_i, X_i\mid R_i=r)\Pr(S_i\mid R_i=r)\Pr(R_i=r).
\] This justifies the estimator given in Equation \eqref{eq:pr-bisg},
and provides us with the following immediate result.

\begin{prop}[Accuracy of BISG] \label{p:bisg}
Under Assumptions \ref{a-ci-sg} and \ref{a-acc}, the BISG estimator produces correct probabilities. That is, we have
$\hat P_{ir} = \Pr(R_i=r\mid G_i,X_i,S_i)$.
\end{prop}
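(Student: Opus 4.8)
The plan is to prove the identity by starting from the target conditional probability on the right-hand side and reducing it to the BISG formula in Equation~\eqref{eq:pr-bisg} through a direct application of Bayes' rule, using Assumption~\ref{a:indep-sgx} to factor the likelihood and Assumption~\ref{a:cens-acc} to replace the population quantities with their Census-table counterparts.

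First I would apply Bayes' rule to the posterior probability of race given the observed proxies, writing
$$\Pr(R_i = r \mid G_i, X_i, S_i) = \frac{\Pr(G_i, X_i, S_i \mid R_i = r)\,\Pr(R_i = r)}{\sum_{r' \in \cR} \Pr(G_i, X_i, S_i \mid R_i = r')\,\Pr(R_i = r')},$$
where the denominator follows from the law of total probability. The numerator and each denominator term share the same structure, so it suffices to analyze the generic conditional likelihood $\Pr(G_i, X_i, S_i \mid R_i = r)$.

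Next I would invoke Assumption~\ref{a:indep-sgx}, which asserts $S_i \indep \{G_i, X_i\} \mid R_i$, to factor this joint conditional distribution as $\Pr(G_i, X_i \mid R_i = r)\,\Pr(S_i \mid R_i = r)$. Substituting this factorization into both numerator and denominator, and then applying Assumption~\ref{a:cens-acc} to replace $\Pr(G_i, X_i \mid R_i = r)$, $\Pr(S_i \mid R_i = r)$, and $\Pr(R_i = r)$ by the Census quantities $q_{G_i X_i \mid r}$, $q_{S_i \mid r}$, and $q_r$ respectively, yields exactly the expression defining $\hat{P}_{ir}$ in Equation~\eqref{eq:pr-bisg}.

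This argument is essentially mechanical, so I do not anticipate a substantive obstacle; the only point requiring care is that the conditional independence in Assumption~\ref{a:indep-sgx} must be applied to the full joint likelihood of $(G_i, X_i, S_i)$ given $R_i$---not merely to a marginal---so that the surname factor $\Pr(S_i \mid R_i)$ cleanly separates from the geolocation-and-covariate factor $\Pr(G_i, X_i \mid R_i)$. Once that factorization is in place, the substitution licensed by Assumption~\ref{a:cens-acc} is immediate and the claimed equality follows.
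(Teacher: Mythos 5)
Your proposal is correct and matches the paper's own argument, which likewise derives the result by applying Bayes' rule to $\Pr(R_i=r\mid G_i,X_i,S_i)$, factoring the likelihood via Assumption~\ref{a:indep-sgx}, and substituting the Census quantities via Assumption~\ref{a:cens-acc}. The paper treats this as immediate and gives no separate appendix proof; your write-up simply spells out the same steps.
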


New methods continue to be developed that improve the calibration of
BISG probabilities, including some machine learning methods based on
labeled data
\citep{zrp, imai2022addressing, argyle2022misclassification, decter2022, greengard2023bisg, cheng2023redundant}.
Fundamentally, these approaches focus on building a more accurate model
for \(R\mid G,X,S\) at the individual level.

\subsection{Bias of Racial Disparity Estimates based on BISG
Probabilities}\label{bias-of-racial-disparity-estimates-based-on-bisg-probabilities}

To estimate racial disparities, BISG probabilities (or other racial
predictions) must be combined with the outcome variable. There are
several common ways researchers do this.

The most frequent is perhaps the \emph{thresholding} or
\emph{classification} estimator, which deterministically assigns
individuals to a predicted racial category based on the BISG estimates
\(\vb{\hat P}_i\) (either the largest \(\hat P_{ir}\) or the one which
exceeds a predetermined threshold). Estimates of \(\Pr(Y=y\mid R=r)\)
are then obtained by tabulating the data by these assigned categories.

Another common approach, which attempts to capture the uncertainty
inherent in race prediction, is the following \emph{weighting}
estimator: \[
    \hat\mu^{(\text{wtd})}_{Y|R}(y\mid r) 
        = \frac{\sum_{i=1}^N \ind\{Y_i=y\}\hat P_{ir}}{\sum_{i=1}^N \hat P_{ir}}.
\]

Unfortunately, accurate and calibrated estimates of individual race
predictions alone, unless they are perfect, are not sufficient for
unbiased estimation of racial disparities using these standard
methodologies.

This should come as no surprise for the thresholding estimator, since it
does not take into account prediction uncertainty in the BISG
probabilities. This is akin to ignoring measurement errors in the
covariates of a regression, something which has long been known to lead
to biased coefficient estimates. Unlike the classical
errors-in-variables setting, however, the bias of the thresholding
estimator is not consistently in the same direction, making it hard to
reason about \citep{chen2019fairness}.

But, the weighting estimator is also biased because the prediction error
of race probabilities may be correlated with the outcome variable of
interest. Fortunately, unlike the threshold estimator, it is easier to
understand the nature of this bias. \citet{chen2019fairness} show that
the asymptotic bias of the weighting estimator is controlled by the
residual correlation of \(Y\) and \(R\) after adjusting for \(G\),
\(X\), and \(S\). We reproduce this result here.

\begin{theorem}[Theorem 3.1 of \citealt{chen2019fairness}] \label{thm:wt-bias}
If race is binary (so $\cR=\{0,1\}$), then as $N\to\infty$, \[
    \hat\mu^{(\text{wtd})}_{Y|R}(y\mid r) - \Pr(Y=y\mid R=r)
    \cvas -\frac{\E[\Cov(\ind\{Y=y\}, \ind\{R=r\}\mid G,X,S)]}{\Pr(R=r)}.
\]
\end{theorem}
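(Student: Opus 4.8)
The plan is to combine the exact accuracy of BISG established in Proposition~\ref{p:bisg} with the strong law of large numbers, and then rewrite the resulting limit in terms of a conditional covariance. Write $W \dfeq (G, X, S)$ for the vector of observed proxies, so that Proposition~\ref{p:bisg} gives $\hat P_{ir} = \Pr(R_i = r \mid W_i)$ under Assumptions~\ref{a:indep-sgx} and~\ref{a:cens-acc}.

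First I would divide both the numerator and denominator of $\hat\mu^{(\text{wtd})}_{Y|R}(y\mid r)$ by $N$ and apply the strong law of large numbers to each separately. Since the tuples are i.i.d., the numerator converges to $\E[\ind\{Y=y\}\Pr(R=r\mid W)]$ and the denominator to $\E[\Pr(R=r\mid W)] = \Pr(R=r)$, both almost surely. Provided $\Pr(R=r) > 0$, the ratio of the two a.s.-convergent averages converges a.s. to the ratio of the limits, so
\[
    \hat\mu^{(\text{wtd})}_{Y|R}(y\mid r) \cvas \frac{\E[\ind\{Y=y\}\Pr(R=r\mid W)]}{\Pr(R=r)}.
\]

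The core algebraic step is to recognize the numerator as (part of) a conditional covariance. Because $\Pr(R=r\mid W)$ is $W$-measurable, the tower property gives $\E[\ind\{Y=y\}\Pr(R=r\mid W)] = \E[\Pr(Y=y\mid W)\Pr(R=r\mid W)]$. On the other hand, by the definition of conditional covariance and a second application of the tower property,
\[
    \E[\Cov(\ind\{Y=y\}, \ind\{R=r\}\mid W)] = \Pr(Y=y, R=r) - \E[\Pr(Y=y\mid W)\Pr(R=r\mid W)].
\]
Combining these two identities expresses the numerator as $\Pr(Y=y,R=r) - \E[\Cov(\ind\{Y=y\}, \ind\{R=r\}\mid W)]$. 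Dividing by $\Pr(R=r)$, noting that $\Pr(Y=y,R=r)/\Pr(R=r) = \Pr(Y=y\mid R=r)$, and subtracting this term from both sides yields exactly the claimed limit.

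Most of this is routine; the one place to be careful is the passage to the limit in the ratio, which needs the denominator limit $\Pr(R=r)$ to be strictly positive (true for any group actually present in the population) together with the fact that a.s. convergence is preserved under the continuous map $(a,b)\mapsto a/b$ away from $b=0$. I would also remark that the binary restriction $\cR=\{0,1\}$ in the original statement is not actually needed once exact accuracy (Proposition~\ref{p:bisg}) is invoked: the argument above holds verbatim for any finite $\cR$, since it relies on $\hat P_{ir}$ equaling the true conditional probability rather than merely being calibrated.
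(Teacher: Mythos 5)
Your argument is correct. Note first that the paper does not actually prove this statement: it is reproduced verbatim from Chen et al.\ (2019) (``Theorem 3.1 of \citealt{chen2019fairness}''), and no proof appears in the appendix, so there is no in-paper proof to match against. Your derivation is the natural self-contained one and it checks out: the strong law applies to the bounded i.i.d.\ summands in numerator and denominator separately, the ratio passes to the limit because $\Pr(R=r)>0$, and the identity
\[
\E\qty[\Pr(Y=y\mid W)\Pr(R=r\mid W)] = \Pr(Y=y,R=r) - \E\qty[\Cov\qty(\ind\{Y=y\},\ind\{R=r\}\mid W)]
\]
with $W=(G,X,S)$ turns the limit into exactly the claimed bias expression. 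Your remark that the binary restriction on $\cR$ is superfluous under exact accuracy is also right, and is consistent with how the paper later uses the result for more than two racial groups. The one thing you should state explicitly rather than leave implicit: your very first step, $\hat P_{ir}=\Pr(R_i=r\mid W_i)$, is not free --- it is Proposition \ref{p:bisg} and hence requires Assumptions \ref{a:indep-sgx} and \ref{a:cens-acc}, which the theorem statement does not list. The surrounding text (which interprets the bias in terms of the ``BISG residuals'' $\ind\{R=r\}-\Pr(R=r\mid G,X,S)$) makes clear this is the intended premise, but a careful write-up should add those assumptions to the hypotheses or flag that the result is being proved for the idealized weighting estimator built from the true conditional probabilities.
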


This result implies that when the BISG residuals
\(\ind\{R=r\}-\Pr(R=r\mid G,X,S)\) are correlated with the outcome,
estimates will be biased, even with infinite data. In fact, the
weighting estimator will often underestimate the magnitude of a
disparity, as the following corollary shows. For instance, in measuring
disparities in loan approval (\(Y\)), if Blacks are less likely to be
approved for loans across all locations and surnames than Whites, then
the weighting estimator would understate the resulting overall
White-Black disparity in loan approval rates.

\begin{restatable}[Underestimation of racial disparity]{corollary}{corunder} \label{cor:under}
Let $y\in\cY$. If race is binary (so $\cR=\{0,1\}$), and 
$\Pr(Y=y\mid R=1, G=g, X=x, S=s)>\Pr(Y=y\mid R=0, G=g, X=x, S=s)$ for all $g\in\cG$, $x\in\cX$, and $s\in\cS$,
then \[
    \hat\mu^{(\text{wtd})}_{Y|R}(y\mid 1) - \hat\mu^{(\text{wtd})}_{Y|R}(y\mid 0)
    < \Pr(Y=y\mid R=1)-\Pr(Y=y\mid R=0).
\]
\end{restatable}

Conversely, Theorem \ref{thm:wt-bias} implies that conditional
independence between individual's race and outcome given their surname,
residence location, and other characteristics is sufficient to eliminate
the asymptotic bias of the weighting estimator.

\begin{assump}[Conditional independence of outcome and race]{CI-YR}
For all $i$, $$Y_i\indep R_i\mid G_i,X_i,S_i.$$
\end{assump}

Figure~\ref{fig-dag-yr} shows a causal directed acyclic graph (DAG) that
satisfies Assumption \ref{a-ci-yr} as well as Assumption \ref{a-ci-sg}.
The dashed node border for \(R\) represents the fact that race is
unobserved. The causal structure in Figure~\ref{fig-dag-yr} implies the
conditional independence relation \(Y\indep R\mid G,X,S\), because all
paths from \(R\) to \(Y\) are blocked by \(G\), \(X\), or \(S\). The key
causal assumption of this DAG is that the effect of race \(R\) on the
outcome \(Y\) must be entirely mediated by surname \(S\), residence
location \(G\), and other observed characteristics \(X\). This type of
exclusion restriction may not be credible in many practical settings
because race can affect the outcome through so many factors, biasing the
weighting estimator.

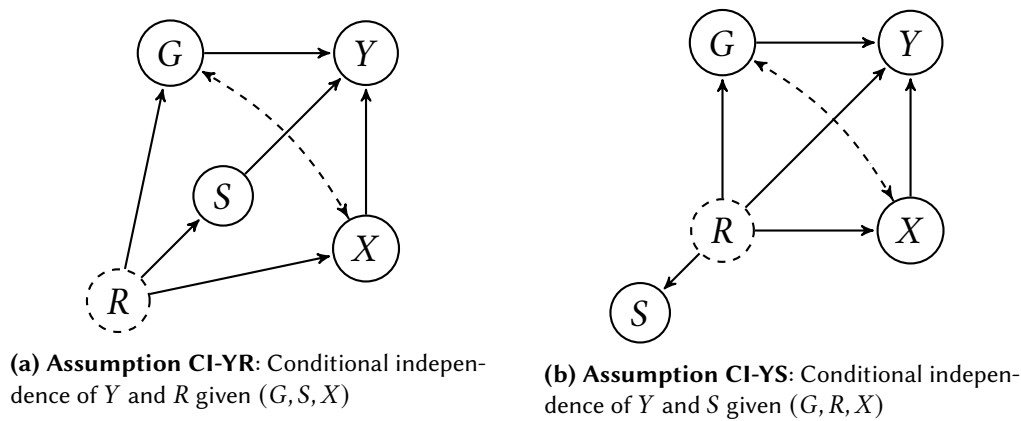
\begin{figure}

\begin{minipage}{0.10\linewidth}
~\end{minipage}%
\begin{minipage}{0.38\linewidth}

\centering{

\tikzstyle{main node}=[circle,draw,font=\sffamily\Large\bfseries]
\tikzstyle{sub node}=[circle,draw,dashed,font=\sffamily\Large\bfseries]
\begin{tikzpicture}[->,>=stealth',shorten >=1pt,auto,node distance=2.6cm,thick]

\node[main node] (G) {$G$};
\node[main node] (Y) [right of=G] {$Y$};
\node[main node] (S) [below left=1.3cm and 1.3cm of Y] {$S$};
\node[sub node] (R) [below left=0.8cm and 0.8cm of S] {$R$};
\node[main node] (X) [below of=Y] {$X$};

\path[every node/.style={font=\sffamily\small}]
(R) edge node  {} (G)
(R) edge node  {} (X)
(S) edge node  {} (Y)
(R) edge node  {} (S)
(G) edge [dash pattern=on 0.0pt off 20pt, bend left=15] node  {} (X)
(X) edge [dashed, bend right=15] node  {} (G)
(G) edge node  {} (Y)
(X) edge node  {} (Y);

\end{tikzpicture}

}

\subcaption{\label{fig-dag-yr}\textbf{Assumption \ref{a-ci-yr}}:
Conditional independence of \(Y\) and \(R\) given \((G, S, X)\)}

\end{minipage}%
\begin{minipage}{0.05\linewidth}
~\end{minipage}%
\begin{minipage}{0.38\linewidth}

\centering{

\tikzstyle{main node}=[circle,draw,font=\sffamily\Large\bfseries]
\tikzstyle{sub node}=[circle,draw,dashed,font=\sffamily\Large\bfseries]
\begin{tikzpicture}[->,>=stealth',shorten >=1pt,auto,node distance=2.5cm,thick]

\node[main node] (G) {$G$};
\node[sub node] (R) [below of=G] {$R$};
\node[main node] (S) [below left=0.5cm and 0.5cm of R] {$S$};
\node[main node] (Y) [right of=G] {$Y$};
\node[main node] (X) [below of=Y] {$X$};

\path[every node/.style={font=\sffamily\small}]
(R) edge node  {} (G)
(R) edge node  {} (X)
(R) edge node  {} (Y)
(R) edge node  {} (S)
(G) edge [dash pattern=on 0.0pt off 20pt, bend left=15] node  {} (X)
(X) edge [dashed, bend right=15] node  {} (G)
(G) edge node  {} (Y)
(X) edge node  {} (Y);

\end{tikzpicture}

}

\subcaption{\label{fig-dag-ys}\textbf{Assumption \ref{a-ci-ys}}:
Conditional independence of \(Y\) and \(S\) given \((G, R, X)\)}

\end{minipage}%
\begin{minipage}{0.10\linewidth}
~\end{minipage}%

\caption{\label{fig-dag}Possible causal structures for which each of the
labeled assumptions is satisfied, represented as a directed acyclic
graph (DAG) where \(G\) is residence location, \(R\) is race, \(S\) is
surname, \(X\) is observed covariates, and \(Y\) is the outcome. Race
\(R\) is unobserved, which is signified by a dashed node boundary. Both
DAGs also satisfy Assumption \ref{a-ci-sg}: Conditional independence of
\(S\) and \((G,X)\) given \(R\).}

\end{figure}%

But in other settings, Assumption \ref{a-ci-yr} may be more plausible.
For example, for a manager reviewing job applications on paper, race is
typically unobserved, but the manager may be influenced by racial or
gender cues in a candidate's name or address
\citep{park2009names, aaslund2012names}. So long as we observe all
information used by the manager and use it in the BISG estimation, the
weighting estimator would give an asymptotically unbiased estimate.
Similarly, in evaluating the fairness of algorithmic decision-making, as
long as all the information used by the algorithm is incorporated into
BISG, Assumption \ref{a-ci-yr} would be appropriate. Outside of these
cases, however, the weighting estimator is likely to be biased.

Finally, while the discussion in this section has been focused on the
BISG methodology, the qualitative results and necessary assumptions
carry over to other approaches which produce probabilistic predictions
of individual race
\citep{zrp, argyle2022misclassification, imai2022addressing, decter2022, greengard2023bisg}.
Just like standard BISG, all of these methods are based on individual
names, geographic location (and sometimes other geographic attributes,
like ACS or decennial census statistics), and possibly additional
individual covariates. Just as with BISG, well-calibrated probabilities
are not generally sufficient to produce unbiased estimates of racial
disparities using the standard weighting or thresholding estimators.

\section{The Proposed Methodology}\label{sec-est}

In this section, we propose an alternative identification strategy for
racial disparities that allow race to directly affect the outcome of
interest. We show that racial disparity is identifiable if surname is
conditionally independent of the outcome given race, residence
geolocation, and other observed information, and the aforementioned
assumptions required by BISG hold. We develop a class of statistical
models, called Bayesian Instrumental Regression for Disparity Estimation
(BIRDiE), that estimate racial disparity under this identification
condition by using surnames as a high-dimensional instrumental variable
for race. These models take as inputs the BISG probabilities, and so can
be easily applied on top of existing analysis pipelines, including those
with alternative probabilistic race prediction methodologies. We also
discuss computational strategies that can be used to apply BIRDiE models
to large data sets, and an extension of the methodology to include an
additional explanatory variable that was not used at the BISG stage.
Finally, we show how to addresses the potential violations of the key
identification assumptions, such as those caused by name-based
discrimination.

\subsection{New Identification Strategy}\label{sec-ident}

To reduce the potential bias of the weighting estimator, we propose an
alternative identification assumption that may be applicable when
Assumption \ref{a-ci-yr} is not credible. Specifically, we assume that
surname, rather than race, satisfies the exclusion restriction
conditional on (unobserved) race, residence location, and other observed
characteristics.

\begin{assump}[Conditional independence of outcome and name]{CI-YS}
For all $i$, $$Y_i\indep S_i\mid R_i,G_i,X_i.$$
\end{assump}

Figure~\ref{fig-dag-ys} shows one possible causal DAG that meets this
assumption as well as Assumption \ref{a-ci-sg}. In this DAG, race can
have a direct effect on the outcome \(Y\) as well as on residence
location \(G\) and other observed characteristics \(X\), while all paths
from \(S\) to \(Y\) are blocked by \(G\), \(X\), or \(R\).

This causal structure is plausible in many real-world settings because,
unlike Assumption \ref{a-ci-yr}, Assumption \ref{a-ci-ys} allows race to
directly affect the outcome. For an outcome like party registration,
Assumption \ref{a-ci-ys} would mean that among White voters in a
particular geographic region, voters named Smith would \emph{a priori}
be no more or less likely to identify with one party than voters named
Thomas. In contrast, Assumption \ref{a-ci-yr} would mean that among
voters named Smith in a particular geographic region, White voters would
be \emph{a priori} be no more or less likely to identify with one party
than Black voters. In this case, Assumption \ref{a-ci-yr} is likely to
be violated, while Assumption \ref{a-ci-ys} is plausible. It is
important to note a key trade-off between the two assumptions. While
Assumption \ref{a-ci-ys} rules out the possibility that surname directly
affects the outcome (e.g., name-based discrimination), such a direct
effect is allowed under Assumption \ref{a-ci-yr}. Section~\ref{sec-sens}
revisits this important issue.

The appropriateness of each assumption depends on a specific
application. In the above hiring example, if the manager reviews
applicants anonymously, there will be no name-based discrimination and
the assumption is likely to be satisfied. The assumption may be violated
in other contexts, however. For example, in studying turnout, if
campaigns use the surnames of individual voters to decide whether to
mobilize them, Assumption \ref{a-ci-ys} will be violated. Yet another
possible violation of the assumption is the existence of unobserved
confounder that affects both outcome and surname. The country of origin
for an immigrant may represent such a confounder: where surnames are
informative of country of origin, even within racial groups (as is often
the case for Asian individuals), variations in outcomes by country of
origin will likely violate \ref{a-ci-ys}. This reflects the limitations
of the relatively coarse racial classifications used in BISG, as
discussed above. Even in these cases, however, conditioning on
(unobserved) race is likely to substantially reduce the magnitude of
association between outcome and surnames. In Section~\ref{sec-sens}, we
show how to address this potential violation of Assumption
\ref{a-ci-ys}.

We briefly note that Assumptions \ref{a-ci-yr} and \ref{a-ci-ys} are not
necessarily mutually exclusive. For example, neither race or surname
could have a direct causal effect on the outcome. In these cases, both
the weighting estimator and our approach proposed below could give
reasonable answers, though the data requirements of each may differ.

The following theorem formally shows that it is possible to
nonparametrically identify racial disparities under Assumption
\ref{a-ci-ys}. The proof of the theorem and all other results in this
paper are deferred to the appendix.

\begin{restatable}[Nonparametric Identification]{theorem}{thmid} \label{thm:id}
For any given $g\in\cG$, $x\in\cX$, and $y\in\cY$, define a matrix $\vb P\in\R^{|\cS|\times|\cR|}$ with entries $p_{sr}=\Pr(R=r\mid G=g, X=x, S=s)$  and a vector $\vb b\in\R^{|\cS|}$ with entries $b_s=\Pr(Y=y\mid G=g, X=x, S=s)$.
Then, under Assumption~\ref{a-ci-ys}, and assuming knowledge of the joint distribution $\Pr(R,G,X,S)$, the conditional probabilities $\Pr(Y=y\mid R, G=g, X=x)$ are identified if and only if both $\vb P$ and the augmented matrix $\mqty(\vb P&\vb b)$ have rank $|\cR|$.
\end{restatable}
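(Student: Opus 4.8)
The plan is to reduce the identification question to the solvability of a single linear system and then invoke the Rouch\'e--Capelli characterization of when such a system has a unique solution. Throughout I would fix $g\in\cG$, $x\in\cX$, $y\in\cY$ and restrict attention to the support $\cS_{gx}=\set{s\in\cS : \Pr(S=s,G=g,X=x)>0}$, so that every conditional probability below is well defined and the columns of $\vb P$ and entries of $\vb b$ are indexed by this set. Write $\theta_r\dfeq\Pr(Y=y\mid R=r,G=g,X=x)$ for the $|\cR|$ target quantities and collect them into a vector $\boldsymbol{\theta}$.

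First I would derive the identifying equation. Conditioning on $R$ and applying the law of total probability gives, for each $s\in\cS_{gx}$,
\[
    b_s=\Pr(Y=y\mid G=g,X=x,S=s)=\sum_{r\in\cR}\Pr(Y=y\mid R=r,G=g,X=x,S=s)\,p_{rs}.
\]
Assumption~\ref{a:indep-ys} ($Y\indep S\mid R,G,X$) collapses the inner factor to $\theta_r$, yielding the linear relation $b_s=\sum_r\theta_r\,p_{rs}$, i.e. $\vb b=\vb P^\top\boldsymbol{\theta}$ in matrix form. Because $\vb b$ depends only on the observable law of $(Y,G,X,S)$ and $\vb P$ is fixed by the assumed-known joint distribution $\Pr(R,G,X,S)$, both are identified; hence the targets $\boldsymbol{\theta}$ are identified exactly when this system determines $\boldsymbol{\theta}$ uniquely.

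The core step is then to characterize unique solvability of $\vb P^\top\boldsymbol{\theta}=\vb b$, a system of $|\cS_{gx}|$ equations in $|\cR|$ unknowns with coefficient matrix $\vb P^\top$. By the Rouch\'e--Capelli theorem, the system is consistent iff $\operatorname{rank}(\vb P^\top)=\operatorname{rank}(\vb P^\top\mid\vb b)$, and a consistent system has a unique solution iff its coefficient matrix has full column rank, i.e. $\operatorname{rank}(\vb P^\top)=|\cR|$. Combining, the solution is unique iff $\operatorname{rank}(\vb P^\top)=\operatorname{rank}(\vb P^\top\mid\vb b)=|\cR|$. Since rank is invariant under transposition and $(\vb P^\top\mid\vb b)^\top$ is exactly $\vb P$ with $\vb b$ adjoined as an extra row, this is the stated condition that both $\vb P$ and the augmented matrix $\mqty(\vb P&\vb b)$ have rank $|\cR|$, which settles both directions at once.

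I expect the main obstacle to be bookkeeping and interpretation rather than mathematical depth: keeping the transpose orientation consistent so that the theorem's augmented matrix is read as $(\vb P^\top\mid\vb b)^\top$, and arguing the ``only if'' direction honestly, where a rank deficiency $\operatorname{rank}(\vb P)<|\cR|$ produces a nonzero null vector $\vb v$ of $\vb P^\top$ so that $\boldsymbol{\theta}$ and $\boldsymbol{\theta}+\vb v$ are both solutions, breaking identification. A secondary subtlety worth flagging is that $\boldsymbol{\theta}$ collects probabilities, so one should confirm that identification is asserted over the unconstrained parameter space, i.e. that the simplex constraints are not silently doing the identifying work. Note finally that under the model a true $\boldsymbol{\theta}$ always exists, so the consistency half of the criterion is automatic and the binding requirement is the full-rank condition on $\vb P$.
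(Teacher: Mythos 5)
Your proposal is correct and follows essentially the same route as the paper's proof: derive the identifying equation $\vb b = \vb P^\top \boldsymbol\theta$ from the law of total probability plus Assumption~\ref{a:indep-ys}, then characterize unique solvability of the resulting linear system via the rank conditions on $\vb P$ and the augmented matrix. Your added care about the support of $S$, the transpose bookkeeping, and the explicit treatment of the ``only if'' direction via a null vector are refinements of the same argument, not a different one.
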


The essence of this identification result is the following simple
observation. Under Assumption \ref{a-ci-ys}, we have, for all
\(y\in\cY\), \(g\in\cG\), \(x\in\cX\), and \(s\in\cS\),
\begin{equation} \label{eq:tprob} 
    \Pr(Y\,{=}\,y\mid G\,{=}\,g, X\,{=}\,x, S\,{=}\,s)
    = \sum_{r\in\cR} \Pr(Y\,{=}\,y\mid R\,{=}\,r, G\,{=}\,g, X\,{=}\,x)
    \Pr(R\,{=}\,r\mid G\,{=}\,g, X\,{=}\,x, S\,{=}\,s).
\end{equation} The leftmost term is estimable from the data and
corresponds to the vector \(\vb b\) in Theorem \ref{thm:id}, while the
rightmost term is the BISG estimand and corresponds to the matrix
\(\vb P\) in Theorem \ref{thm:id}. Lastly, the remaining term in the
middle can be solved for, since Equation \eqref{eq:tprob} holds across
all combinations of \(Y\), \(G\), \(X\), and \(S\), leading to a large
system of linear equations. Specifically, we have
\((|\cY|-1)\times|\cG|\times|\cX|\times|\cS|\) equations with
\((|\cY|-1)\times|\cG|\times|\cX|\times|\cR|\) unknowns. Since
\(|\cR| \ll |\cS|\), we can identify these unknowns as long as the
linear system has sufficient rank. Our result is closely related to
causal identification based on proxy variables in the presence of
unmeasured confounding
\citep{kuroki2014measurement, miao2018identifying, knox2022proxy}. Here,
we use surname as a proxy variable for (unobserved) race to identify
racial disparities.

Together with Proposition \ref{p:bisg}, Theorem \ref{thm:id} implies
that racial disparities can be identified under Assumptions
\ref{a-ci-sg}, \ref{a-acc}, and \ref{a-ci-ys}. The identifying equation
\eqref{eq:tprob} shows that \(\Pr(Y=y\mid G=g, X=x, S=s)\) is linear in
the BISG estimands \(\Pr(R=r\mid G=g, X=x, S=s)\). Thus, it is natural
to consider the following least-squares estimator of
\(\Pr(Y=y\mid R, G=g, X=x)\) under this alternative identification
strategy, \[
    \hat{\vb*\mu}^{(\text{ols})}_{Y\mid RGX}(y\mid \cdot, g,x) 
    =  (\hat{\vb P}_{\cI(xg)}^\top \hat{\vb P}_{\cI(xg)})^{-1}\hat{\vb P}_{\cI(xg)}\,\ind\{{\vb Y}_{\cI(xg)} = y\},
\] where as above \(\hat{\vb P}\) is the matrix of BISG probabilities,
and \(\cI(xg)\) is the set of individuals \(i\) with \(X_i=x\) and
\(G_i=g\). Here and throughout the paper, a dot will indicate a vector
constructed over that index, so
\(\hat{\vb*\mu}^{(\text{ols})}_{Y\mid RGX}(y\mid \cdot, g,x)\) is a
vector of conditional probabilities for a particular outcome level \(y\)
across all racial groups in \(\cR\). By post-stratifying this estimator
across the \((G,X)\) cells, we arrive at an estimator of
\(\Pr(Y=y\mid R)\), \[
\hat{\vb*\mu}^{(\text{p-ols})}_{Y\mid R}(y\mid r) 
= \sum_{x\in\cX,g\in\cG} (\hat{\vb P}_{\cI(xg)}^\top \hat{\vb P}_{\cI(xg)})^{-1} 
\hat{\vb P}_{\cI(xg)}\,\ind\{{\vb Y}_{\cI(xg)} = y\})_r q_{gx|r} ,
\] since \(q_{gx|r}=\Pr(G=g,X=x\mid R=r)\) under Assumption \ref{a-acc}.
This estimator is unbiased, as the following theorem shows.

\begin{restatable}[Unbiasedness of OLS Estimator]{theorem}{thmolsunb} \label{thm:ols-unb}
If Assumptions \ref{a-ci-sg}, \ref{a-acc}, and \ref{a-ci-ys} hold, 
and the identification conditions in Theorem~\ref{thm:id} are satisfied,
then for all $y\in\cY$ and $r\in\cR$, \[
    \E[\hat{\mu}^{(\text{p-ols})}_{Y\mid R}(y\mid r)] = \Pr(Y=y\mid R=r).
\]
\end{restatable}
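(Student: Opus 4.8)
The plan is to prove unbiasedness by first establishing \emph{conditional} unbiasedness of each cell-level estimator $\hat{\vb*\mu}^{(\text{ols})}_{Y\mid RGX}(y\mid\cdot,g,x)$, and then propagating this through the post-stratification. Because the census weights $q_{gx|r}$ are nonrandom, linearity of expectation immediately reduces the claim for $\hat{\mu}^{(\text{p-ols})}_{Y\mid R}(y\mid r)$ to showing, for each cell $(g,x)$, that $\E[\hat{\vb*\mu}^{(\text{ols})}_{Y\mid RGX}(y\mid\cdot,g,x)] = \Pr(Y=y\mid R=\cdot,\,G=g,\,X=x)$.

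For the cell-level claim I would condition on the design. Let $\mathcal{D}=\{(G_i,X_i,S_i)\}_{i=1}^N$ collect all covariates and surnames, so that conditional on $\mathcal{D}$ the matrix $\hat{\vb P}_{\cI(xg)}$ is fixed and the estimator is a linear function of the indicators $\ind\{Y_i=y\}$. By Proposition~\ref{p:bisg}, Assumptions~\ref{a:indep-sgx} and \ref{a:cens-acc} make the BISG probabilities exact, so (writing the cell design with rows indexing individuals $i\in\cI(xg)$ and columns indexing races) the $(i,r)$ entry equals $p_{r S_i}$ with $p_{rs}$ as in Theorem~\ref{thm:id}. The crucial input is the identifying equation~\eqref{eq:tprob}: under Assumption~\ref{a:indep-ys}, $\E[\ind\{Y_i=y\}\mid\mathcal{D}]=\Pr(Y=y\mid G=g,X=x,S=S_i)=\sum_{r}\mu_r\,p_{r S_i}$, where $\mu_r=\Pr(Y=y\mid R=r,G=g,X=x)$. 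In matrix form this reads $\E[\ind\{\vb Y_{\cI(xg)}=y\}\mid\mathcal{D}]=\hat{\vb P}_{\cI(xg)}\,\vb*\mu$, i.e.\ the conditional mean of the outcome vector lies exactly in the column space of the design. Substituting into the least-squares formula collapses the hat matrix, $(\hat{\vb P}_{\cI(xg)}^\top\hat{\vb P}_{\cI(xg)})^{-1}\hat{\vb P}_{\cI(xg)}^\top\hat{\vb P}_{\cI(xg)}\,\vb*\mu=\vb*\mu$, so the cell estimator is conditionally unbiased; the tower property then gives unconditional unbiasedness since $\vb*\mu$ is a fixed vector.

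It remains to assemble the pieces. Taking expectations of the post-stratified estimator and inserting the cell-level result yields
\[
\E[\hat{\mu}^{(\text{p-ols})}_{Y\mid R}(y\mid r)]
= \sum_{g\in\cG,\,x\in\cX}\Pr(Y=y\mid R=r,\,G=g,\,X=x)\,q_{gx|r}.
\]
By Assumption~\ref{a:cens-acc} we have $q_{gx|r}=\Pr(G=g,X=x\mid R=r)$, and the law of total probability over the $(G,X)$ cells collapses the sum to $\Pr(Y=y\mid R=r)$, which is the desired conclusion.

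I expect the main obstacle to be the randomness and invertibility of the sample design matrix $\hat{\vb P}_{\cI(xg)}$. The conditions of Theorem~\ref{thm:id} constrain the \emph{population} matrix $\vb P$, whereas the estimator requires $\hat{\vb P}_{\cI(xg)}^\top\hat{\vb P}_{\cI(xg)}$ to be invertible, i.e.\ the surnames actually realized in each cell must contribute probability vectors that jointly have rank $|\cR|$. The clean way to handle this is to condition on $\mathcal{D}$ and work on the event that every nonempty cell has full column rank $|\cR|$---this is precisely where the rank condition of Theorem~\ref{thm:id} enters---so that the collapse $(\hat{\vb P}_{\cI(xg)}^\top\hat{\vb P}_{\cI(xg)})^{-1}\hat{\vb P}_{\cI(xg)}^\top\hat{\vb P}_{\cI(xg)}=I_{|\cR|}$ is exact and the tower argument goes through. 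Conditioning on $\mathcal{D}$ simultaneously fixes the otherwise-random cell sizes and memberships, so no separate treatment of those is needed beyond ensuring the cells are nonempty and full rank.
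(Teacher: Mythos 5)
Your proof is correct and follows essentially the same route as the paper's: both establish that the conditional mean of $\ind\{Y_i=y\}$ given the design is linear in the BISG probabilities with coefficients $\Pr(Y=y\mid R=r,G=g,X=x)$ via the identifying equation, invoke standard OLS conditional unbiasedness, and then poststratify with the nonrandom census weights and the law of total probability. Your explicit handling of the invertibility of the realized sample matrix $\hat{\vb P}_{\cI(xg)}^\top\hat{\vb P}_{\cI(xg)}$ is in fact more careful than the paper, which simply appeals to ``the standard results.''
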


It is worth comparing this OLS estimator with the weighting estimator
\(\hat{\mu}^{(\text{wtd})}_{y\mid r}\). The next theorem shows that
within the \((G,X)\) cells these two estimators are guaranteed to
disagree, unless either the BISG probabilities perfectly discriminate or
the weighting estimator is constant across races. Unfortunately, these
two conditions are almost never met in practice. This underscores the
importance of selecting the appropriate assumption (Assumption
\ref{a-ci-yr} or \ref{a-ci-ys}) for a particular analysis, since they
imply different estimators with different results.

\begin{restatable}[Necessary and Sufficient Condition for Equality of the Weighting and OLS Estimators]{theorem}{thmwtdols} \label{thm:wtd-vs-ols}
For any $y\in\cY$, $g\in\cG$ and $x\in\cX$, within the set of individuals with $G_i=g$ and $X_i=x$, we have that $\hat{\vb*\mu}^{(\text{wtd})}_{Y|R}(y\mid\cdot) = \hat{\vb*\mu}^{(\text{ols})}_{Y|R}(y\mid\cdot)$
if and only if for every pair $j,k\in\cR$, either the BISG probabilities perfectly discriminate 
(i.e., $\Pr(R_i=j\mid G_i,X_i,S_i)>0$ implies $\Pr(R_i=k\mid G_i,X_i,S_i)=0$ and vice versa) or 
$\hat{\mu}^{(\text{wtd})}_{Y|R}(y\mid j)=\hat{\mu}^{(\text{wtd})}_{Y|R}(y\mid k)$.
\end{restatable}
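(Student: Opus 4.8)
The plan is to rewrite the equality of the two estimators, restricted to a fixed $(g,x)$ cell, as a statement about the kernel of a symmetric matrix, and then to recognize that matrix as a weighted graph Laplacian on the vertex set $\cR$. Write $P \dfeq \hat{\vb P}_{\cI(xg)}$ for the $|\cI(xg)|\times|\cR|$ matrix of BISG probabilities and $v \dfeq \ind\{\vb Y_{\cI(xg)}=y\}$ for the vector of outcome indicators. Set $M \dfeq P^\top P$, let $a \dfeq P^\top v$ so that $a_r = \sum_{i\in\cI(xg)}\ind\{Y_i=y\}\hat P_{ir}$, and let $d \dfeq P^\top\vb 1$ with $D \dfeq \diag(d)$. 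In this notation $\hat{\vb*\mu}^{(\text{ols})} = M^{-1}a$ and $w \dfeq \hat{\vb*\mu}^{(\text{wtd})} = D^{-1}a$. Since $M$ is invertible (a prerequisite for the OLS estimator to be defined) and each $d_r>0$ (a prerequisite for the weighting estimator), the two agree if and only if $M^{-1}a = D^{-1}a$; writing $a = Dw$, this is equivalent to $Mw = Dw$, i.e.\ $(D-M)w = 0$.

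Next I would show that $L \dfeq D - M$ is a weighted Laplacian. The essential input is that BISG probabilities sum to one for each individual, so $P\vb 1 = \vb 1$ (ones of the appropriate lengths), whence $M\vb 1 = P^\top P\vb 1 = P^\top\vb 1 = d = D\vb 1$. This already gives $L\vb 1 = 0$, so $L$ has zero row sums. Writing $W_{jk}\dfeq M_{jk} = \sum_{i\in\cI(xg)}\hat P_{ij}\hat P_{ik}\ge 0$ for $j\ne k$, the off-diagonal entries $L_{jk} = -W_{jk}$ are nonpositive, and the zero-row-sum property forces $L_{jj} = \sum_{k\ne j}W_{jk}$. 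Thus $L$ is exactly the Laplacian of the weighted graph on $\cR$ whose edge weight between races $j$ and $k$ is $W_{jk}$.

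Finally I would invoke the standard Laplacian quadratic-form identity $w^\top L w = \sum_{j<k}W_{jk}(w_j-w_k)^2$, which shows $L$ is positive semidefinite, so $Lw = 0$ if and only if $w^\top L w = 0$, i.e.\ if and only if $W_{jk}(w_j-w_k)^2 = 0$ for every pair $j,k$. This is a pairwise dichotomy: for each $j,k$, either $W_{jk}=0$ or $w_j=w_k$. It then remains to translate the two cases. Since $W_{jk}=\sum_i\hat P_{ij}\hat P_{ik}$ is a sum of nonnegative terms, $W_{jk}=0$ holds exactly when no individual in the cell has positive BISG probability on both $j$ and $k$, which is the stated perfect-discrimination condition ($\Pr(R_i=j\mid G_i,X_i,S_i)>0$ implies $\Pr(R_i=k\mid G_i,X_i,S_i)=0$ and vice versa). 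And $w_j=w_k$ is, by definition of $w$, the equality $\hat\mu^{(\text{wtd})}_{Y|R}(y\mid j)=\hat\mu^{(\text{wtd})}_{Y|R}(y\mid k)$. Combining these yields the claimed necessary and sufficient condition.

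I expect the only real obstacle to be recognizing and justifying the Laplacian structure---in particular, that the sum-to-one property of the BISG probabilities is precisely what delivers the zero row sums and makes $D-M$ positive semidefinite. Without this, the reduction $(D-M)w=0$ could not be turned into the clean pairwise dichotomy. The reduction of the estimator equality to a kernel condition, and the final translation of the two cases, are routine once this structure is in hand.
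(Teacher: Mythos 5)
Your proposal is correct, and it takes a genuinely different route from the paper. The paper argues geometrically: it writes the weighting estimator as a ratio of projections onto the columns of $\hat{\vb P}_{\cI(xg)}$ and the OLS estimator as the coordinates of $\proj_{\hat{\vb P}_{\cI(xg)}}(\ind\{\vb Y_{\cI(xg)}=y\})$ in the column basis, expands both that projection and the vector $\vb 1$ in this basis with coefficients $c_j$, derives the system $\sum_{k}(c_j-c_k)\,\hat{\vb P}_{\cI(xg)j}^\top\hat{\vb P}_{\cI(xg)k}=0$, and then sorts the $c_j$ and peels off level sets one at a time to force orthogonality between columns with distinct coefficients; the converse is a separate direct computation. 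You instead reduce the equality to $(D-M)w=0$ with $w$ the weighting estimator, observe that the row-stochasticity of the BISG matrix makes $D-M$ a weighted graph Laplacian with nonnegative edge weights $W_{jk}=\hat{\vb P}_{\cI(xg)j}^\top\hat{\vb P}_{\cI(xg)k}$, and read the pairwise dichotomy off the quadratic form $w^\top(D-M)w=\sum_{j<k}W_{jk}(w_j-w_k)^2$. The two arguments rest on the same two facts (rows of $\hat{\vb P}$ sum to one; Gram entries are nonnegative), but yours buys a single chain of equivalences in place of the paper's separate forward and reverse directions, replaces the iterative ordering argument with one positive-semidefiniteness observation, and states the dichotomy directly in terms of the weighting estimates $w_j$ rather than the OLS coefficients $c_j$, which matches the theorem statement more literally. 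The paper's version is more elementary (inner products and an ordering argument only) and makes the role of the basis expansion of $\vb 1$ explicit, but your Laplacian framing is the more transparent explanation of why the condition splits into exactly these two cases per pair. Both versions rely, as yours notes, on $\hat{\vb P}_{\cI(xg)}$ having full column rank and positive column sums so that both estimators are defined.
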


Despite potential advantages over the weighting estimator, the OLS
estimator is not well-suited to estimation in practice, since it ignores
the fact that the unknown parameters are probabilities and thus
constrained to be nonnegative and sum to 1. As a result, in any
particular sample, the estimator can produce impossible or contradictory
estimates. This is particularly problematic because a large number of
unique surnames make both \(\vb P\) and \(\vb b\) high-dimensional. To
address this challenge, and to open the door to more flexible modeling,
we next propose a Bayesian modeling approach that is based on our
identification strategy and yet satisfies necessary constraints.

\subsection{Bayesian Instrumental Regression for Disparity
Estimation}\label{sec-birdie}

The BIRDiE approach combines a user-specified complete-data outcome
model \(\pi(Y\mid R, G, X, \Theta)\), parametrized by \(\Theta\), with
the BISG model in order to estimate the distribution \(Y\mid R\) that is
of interest. In this regard, it mirrors the two-stage instrumental
variables (IV) regression: a first stage (BISG) that estimates the
relationship between instrument (surname) and variable of interest
(race), and a second stage (BIRDiE) that uses the first-stage estimates
to produce valid estimates of the quantity of interest. Unlike two-stage
IV, however, the BIRDiE approach is based on a coherent joint
distribution of data, unknown parameters, and race. We exploit this fact
and develop the general BIRDiE modeling approach below.

Specifically, the BIRDiE posterior is obtained by applying Assumptions
\ref{a-ci-sg}, \ref{a-acc}, and \ref{a-ci-ys} to the joint distribution
\(\pi(Y, R, G, X, S, \Theta)\): \begin{align}
    \pi(\Theta, \vb R\mid \vb Y, \vb G, \vb X, \vb S)
    &\propto \pi(\Theta, \vb R, \vb Y, \vb G, \vb X, \vb S) \nonumber \\
    &\propto \pi(\Theta)\prod_{i=1}^N \pi(Y_i\mid R_i, G_i, X_i, \Theta)
            \pi(R_i\mid G_i, X_i, S_i) \nonumber \\
    &= \pi(\Theta) \prod_{i=1}^N \pi(Y_i\mid R_i, G_i, X_i, \Theta) \hat{P_i}_{R_i}, \label{eq:posterior}
\end{align} where as above \(\hat{\vb P}_i\) are the BISG probability
estimates for individual \(i\), which depend on Census data represented
in \(\vb q_{GX\mid R}\), \(\vb q_{S\mid R}\), and \(\vb q_R\), but not
on the outcome-model parameters \(\Theta\). As a result, these
``first-stage'' BISG estimates can be plugged directly into the BIRDiE
posterior computation without any loss of Bayesian coherency. This
remains true if a more complex model is used in place of the BISG
probabilities \citetext{\citealp[such as those
of][]{zrp}; \citealp{imai2022addressing}; \citealp{argyle2022misclassification}; \citealp[and][]{decter2022}},
so long as the parameters of the first-stage model are \emph{a priori}
independent of the parameters of the BIRDiE model.

To apply this general BIRDiE model to a particular analysis requires
choosing a complete-data outcome model, given by the likelihood
\(\pi(Y_i\mid R_i, G_i, X_i, \Theta)\) and prior \(\pi(\Theta)\). Since
\(Y\) is discrete, a categorical regression model is appropriate for
\(\pi(Y_i\mid R_i, G_i, X_i, \Theta)\). The parametrization of the
categorical regression will depend on the analyst's goals, computational
resources, and prior beliefs about the structure of the problem. We
present here several reasonable alternatives that trade off modeling
flexibility and computational efficiency.

\subparagraph{Complete-pooling model.}\label{complete-pooling-model.}

The simplest possible model is one in which the relationship between
\(Y\) and \(R\) does not vary with \(G\) or \(X\). This model is
parametrized by \(\Theta=\{\vb*\theta_r\}_{r\in\cR}\), which describe
the distribution of \(Y\) within every level of \(R\): \begin{align*}
        Y_i\mid R_i, G_i, X_i, \Theta &\sim \Categorical_{\cY}(\vb*\theta_{R_i}) \\
        \vb*\theta_r &\iid \Dirichlet(\vb*\alpha),
\end{align*} where \(\Categorical_{\cY}\) denotes a discrete
(categorical) distribution on the set \(\cY\). With known \(\vb R\), the
posterior of \(\vb*\theta_r\) is conjugate, a fact which will make
computation under the EM scheme described in Section~\ref{sec-compute}
below extremely efficient. Of course, this efficiency comes at the cost
of a restrictive model that allows for no role of \(G\) and \(X\). If in
reality \(\Pr(Y\mid R)\) does vary along these dimensions, it is
possible that the posterior of \(\vb*\theta_r\) will not accurately
estimate \(\Pr(Y\mid R)\). In any case, if the analyst is interested in
subgroup or small-area estimates of \(\Pr(Y\mid R)\), the
complete-pooling model will be of little use.

\subparagraph{Saturated (no-pooling)
model.}\label{saturated-no-pooling-model.}

At the other end of the spectrum from the complete-pooling model is a
\emph{saturated} or no-pooling model, which estimates a different
distribution of \(Y\mid R\) within every level of \(G\) and \(X\):
\begin{align*}
        Y_i\mid R_i, G_i, X_i, \Theta &\sim \Categorical_{\cY}(\vb*\theta_{R_iG_iX_i}) \\
        \vb*\theta_{rgx} &\iid \Dirichlet(\vb*\alpha).
\end{align*} This model is closest to the OLS estimator, though it
ensures that all probability estimates lie in \([0, 1]\). As with the
complete-pooling model, the posterior of \(\vb*\theta_{rgx}\) is
conjugate to its prior, and so computation can be made efficient.
Additionally, this model allows for any arbitrary relationship between
\(Y\), \(R\), \(G\), and \(X\). Since it is fully nonparametric, the
posterior will converge to the true \(\Pr(Y\mid R, G, X)\) with enough
data in each \((G, X)\) cell. However, in practice, the model can suffer
from the curse of dimensionality: the number of \((G, X)\) cells may be
relatively large compared to the amount of available data, or even
exceed it, especially since \(G\) can be quite large, covering many
blocks or ZIP codes. In these cases, the prior will dominate the data in
each cell, which could have a large biasing effect even on overall
inferences about \(\Pr(Y\mid R)\).

\subparagraph{General mixed-effects
model.}\label{general-mixed-effects-model.}

As a compromise between the complete-pooling and no-pooling model, a
partial pooling approach based on a multinomial mixed-effects model can
be used. Properly specified, the mixed-effects model maintains the
flexibility of the saturated model while avoiding its high bias and
variance in finite samples. \begin{align*}
    Y_i\mid R_i, G_i, X_i, \Theta &\sim \Categorical_{\cY}(g^{-1}(\vb*\mu_{rgx})) \\
    \mu_{rgxy} &= \vb W\vb*\beta_{ry} + \vb Z\vb u_{ry} \\
    \quad \vb u_{ry}\mid \phi_{ry} &\sim \Norm(0, \Sigma(\vb*\phi_{ry})) \\
    \vb*\beta_{ry} &\iid f^{(\beta)}_r, \quad \vb*\phi_{ry} \iid f^{(\phi)}_r ,
\end{align*} where \(g^{-1}\) is a softmax or other link function,
\(\vb W\) and \(\vb Z\) are matrices of fixed and random effects,
respectively, \(\vb*\phi\) is a vector of random-effect parameters, and
\(f^{(\beta)}\) and \(f^{(\phi)}\) are some priors for the
super-scripted parameters. Some fixed or random effects could be shared
across combinations of \(R\) and \(Y\), though this could complicate
computation. In practice, we recommend including \(X\) and especially
\(G\) in the model as random effects, with hierarchical structure as
appropriate. Such a structure partially pools estimates of
\(\Pr(Y\mid R, X, G)\) towards an overall estimate of \(\Pr(Y\mid R)\),
allowing the model to share information between geographic areas. This
should prove especially useful in cases where some areas have few or no
observations for certain racial groups.

We also recommend including group-level covariates as fixed effects,
which will help share information across random effects and
significantly improve generalization performance to unseen random effect
levels \citep{buttice2013does}. For example, if \(G\) records counties,
analysts could include racial and socioeconomic variables measured at
the county level as predictors. This would help produce more accurate
estimates of \(\Pr(Y\mid R, G)\) to the extent that variation in these
probabilities is associated with these racial and socioeconomic
variables. Ultimately the structure of this general model will have to
be chosen based on the data and the relevant research question.

\subsection{Computation}\label{sec-compute}

The posterior in Equation \eqref{eq:posterior} contains the
high-dimensional discrete nuisance parameter \(\vb R\), which poses a
challenge for computation. We suggest two approaches for handling
\(\vb R\), one suited to small sample sizes, and one suited to large
sample sizes. We also discuss uncertainty quantification for conjugate
complete-data models.

\subparagraph{Small samples: Inference directly on the marginal
likelihood.}\label{small-samples-inference-directly-on-the-marginal-likelihood.}

Since \(\vb R\) is discrete, we can marginalize it out as follows: \[
    \pi(\Theta\mid \vb Y, \vb G, \vb X, \vb S)
    = \sum_{\vb r\in \cR^N} \pi(\Theta, \vb r\mid \vb Y, \vb G, \vb X, \vb S)
    \propto \pi(\Theta) \prod_{i=1}^N \sum_{r\in\cR} \pi(Y_i\mid r, G_i, X_i, \Theta)\hat{P}_{ir}.
    \numberthis \label{eq:post-marg}
\] This decouples the total number of parameters from the sample size.
Additionally, Equation \eqref{eq:post-marg} has only continuous
parameters, and so can be used with any general Bayesian inference
procedure such as Markov chain Monte Carlo (MCMC). However, the
moderate-to-high dimensionality in practical settings, even after
integrating out \(\vb R\), and the expensive likelihood calculation due
to the sum nested within the outer product, makes MCMC algorithms
computationally too expensive outside of relatively small datasets.

\subparagraph{Large samples:
Expectation-Maximization.}\label{large-samples-expectation-maximization.}

When the number of individuals exceeds a thousand or so, we propose an
Expectation-Maximization (EM) algorithm \citep{dempster1977} to
calculate the maximum \emph{a posteriori} (MAP) estimate of \(\Theta\)
for BIRDiE models. The EM algorithm alternates between an E-step which
calculates the expected log posterior density \(Q\), averaging over the
missing \(\vb R\), and an M-step which maximizes \(Q\) over values of
\(\Theta\).

Specifically, given a current parameter estimate \(\Theta^{(t)}\), the
expected log posterior density can be written as \begin{align*}
    Q(\Theta^{(t+1)}\mid \Theta^{(t)})
    &= \E_t[\log \pi(\Theta^{(t+1)}, \vb R \mid \vb Y, \vb G, \vb X, \vb S)] \\
    &= \sum_{\vb r\in\cR^N} 
        \pi(\vb r \mid \Theta^{(t)}, \vb Y, \vb G, \vb X, \vb S)
        \log \pi(\Theta^{(t+1)}, \vb r \mid \vb Y, \vb G, \vb X, \vb S) \\
    &= \log \pi(\Theta^{(t+1)}) + \sum_{i=1}^N\sum_{r\in\cR} \Big\{
        \qty(\log \pi(Y_i\mid r, G_i,, X_i, \Theta^{(t+1)}) + \log\hat P_{ir}) \\
        &\qquad\qquad\qquad\qquad\qquad\qquad\times 
        \pi(R_i=r \mid \Theta^{(t)}, Y_i, G_i, X_i, S_i) \Big\} \\
    &= C + \log \pi(\Theta^{(t+1)}) + \sum_{i=1}^N\sum_{r\in\cR} 
        \tilde{P}_{ir\mid Y}^{(t)} \log \pi(Y_i\mid r, G_i, X_i, \Theta^{(t+1)}),
\end{align*} where \(C\) is a constant that can be ignored as it does
not depend on \(\Theta^{(t+1)}\), and
\(\tilde{\vb P}_{\mid Y}^{(t)}=\pi(\vb R \mid \Theta^{(t)}, \vb Y, \vb G, \vb X, \vb S)\)
are the BISG probabilities updated with Bayes' rule using the outcome
\(\vb Y\): \begin{equation}
    \tilde{P}_{ir\mid Y}^{(t)}
    = \frac{\pi(Y_i\mid r, G_i,, X_i, \Theta^{(t)}) \hat P_{ir}}{
        \sum_{r'\in\cR} \pi(Y_i\mid r', G_i,, X_i, \Theta^{(t)}) \hat P_{ir'} }. \label{eq:update-bisg}
\end{equation} At the M-step, \(Q(\Theta^{(t+1)}\mid \Theta^{(t)})\) is
straightforward to maximize, since it is just the log complete-data
posterior, with likelihood weights given by the
\(\tilde{P}_{ir\mid Y}^{(t)}\). Additionally, if \(\Theta\) can be
partitioned into parameters which only affect individuals in each racial
group (as is the case with all the models described in
Section~\ref{sec-birdie} above), then the maximization can be performed
separately on each group of individuals.

A critical advantage of this EM scheme over working directly with the
marginal likelihood is that the maximization in the M-step can be
performed using sufficient statistics calculated as part of the E-step,
rather than on all of the individual entries in the data. Since the
M-step is usually the practical (if not also asymptotic) bottleneck in
the computation, this is enormously helpful---the problem size scales
with \(|\cY|\times|\cX|\times|\cG|\) rather than with \(N\).
Specifically, notice that we can rewrite
\(Q(\Theta^{(t+1)}\mid \Theta^{(t)})\) (dropping the unnecessary
constant) as \begin{align*}
    Q(\Theta^{(t+1)}\mid \Theta^{(t)})
    &= \log \pi(\Theta^{(t+1)}) + \sum_{i=1}^N\sum_{r\in\cR} 
        \tilde{P}_{ir\mid Y}^{(t)} \log \pi(Y_i\mid r, G_i, X_i, \Theta^{(t+1)}) \\
    &= \log \pi(\Theta^{(t+1)}) + \sum_{r\in\cR}\sum_{y\in\cY}\sum_{x\in\cX}\sum_{g\in\cG} 
        \log \pi(y\mid r, g, x, \Theta^{(t+1)}) 
        \qty(\sum_{i\in\cI(yxg)} \tilde{P}_{ir\mid Y}^{(t)}),
\end{align*} where analogously to above \(\cI(yxg)\) is the set of
individuals with \(Y_i=y\), \(X_i=x\), and \(G_i=g\).

For BIRDiE models where the complete-data likelihood is conjugate to the
prior, such as the complete- and no-pooling models, these sufficient
statistics are used in the M-step anyway, and can be efficiently
calculated during the E-step as well. In combination with the
acceleration scheme described next, this allows the entire EM algorithm
to be run to convergence on data with hundreds of thousands or millions
of individuals in a matter of seconds.

While EM algorithms are highly stable, due to their monotonic increasing
of the marginal likelihood, they are also often slow to converge
\citep{laird1993}. To address this, we propose, and include in our
open-source software implementation, the use of fixed-point iteration
accelerators such as Anderson acceleration or SQUAREM
\citep{varadhan2008simple}. These techniques can substantially reduce
the overall computational time without meaningfully affecting the
stability of inference.

\subparagraph{Uncertainty quantification via blocked Gibbs
sampling.}\label{uncertainty-quantification-via-blocked-gibbs-sampling.}

While computationally efficient, the EM algorithm does not provide any
uncertainty quantification. However, in large samples, sampling and
model-based uncertainty dominated by biases caused by even small
violations of the underlying assumptions, a problem we discuss below in
Section~\ref{sec-sens} and the accompanying appendix. Even so, it is
often useful to have some measure of sampling uncertainty. This is
possible using blocked Gibbs sampling for some BIRDiE models with
conjugate complete-data posteriors, such as the complete- and no-pooling
models described above.

Our Gibbs sampling strategy is to alternate sampling from the
model-updated BISG probabilities
\(\pi(\vb R\mid \vb Y, \vb G, \vb X, \vb S, \Theta)\) and sampling from
the complete-data posterior
\(\pi(\Theta\mid \vb Y, \vb R, \vb G, \vb X, \vb S)\). The first step
involves the same calculations as the E-step in Equation
\eqref{eq:update-bisg}. The second step is computationally tractable in
medium-to-large samples when the complete-data likelihood is conjugate
to the prior, as is the case for the Categorical-Dirichlet pooling
models proposed above.

Categorical-Dirichlet models with latent discrete variables, like the
pooling BIRDiE models, are often tackled using a collapsed Gibbs sampler
where \(\Theta\) has been marginalized out. We found that approach to be
unsuccessful here, however, since the one-by-one updating of each
individual's \(R_i\) meant that the sampler was unable to traverse to
the correct region of parameter space and got stuck near the BISG
initialization. In contrast, the blocked Gibbs sampler is vectorized
over individuals and rapidly moves to the mode identified by the EM
algorithm.

For non-conjugate BIRDiE models with few parameters, bootstrapping the
EM procedure is computationally feasible and can be used to approximate
the covariance matrix of the MAP estimate.

\subsection{Updated Individual Race
Probabilities}\label{sec-update-bisg}

The EM algorithm produces the updated individual race probabilities
given in Equation \eqref{eq:update-bisg}. One feature of these updated
probabilities is that it is appropriate to apply the weighting estimator
to them to estimate disparities. This is because the updated
probabilities condition on \(\vb Y\), and so the asymptotic bias term in
Theorem \ref{thm:wt-bias} becomes zero. In fact, the weighting estimate
from the updated probabilities is identical to the BIRDiE estimate.
While more study is required, for downstream settings where weights are
needed, generating these weights with BISG followed by BIRDiE will
likely produce more accurate results than simply using BISG weights
alone.

\subsection{Additional Explanatory Variables}\label{sec-addlcov}

Often, researchers are interested in not just \(\Pr(Y\mid R)\) but also
\(\Pr(Y\mid W, R)\), for some variable \(W\in\cW\) which is not part of
the BISG predictors \((X, G)\). For example, a lending firm auditing
potential racial disparities in lending decisions would likely be
interested both in how the rate of loan approval (\(Y\)) varies by race,
but also how loan approval varies by race, conditional on a measure of
creditworthiness (\(W\)). The unconditional disparities reflect
realities of systemic racism and inequality, while the conditional
disparities measure the fairness of the firm's lending decisions after
controlling for these systemic factors. Such estimates could be used to
compute various measures of algorithmic fairness, including calibration
parity and false positive error rate balance. Another scenario is a
policy evaluation study, where researchers are interested in how the
impact of policy varies across racial groups. Such an analysis requires
incorporating an interaction between race and the treatment variable.

There are two main ways to perform such an analysis with our proposed
methodology. The first, and perhaps simplest, is to apply the
methodology to the combined variable \(\underline{YW}\in\cY\times\cW\).
This will produce estimates of \(\Pr(Y, W\mid R)\), from which
\(\Pr(Y\mid W, R)\) can be straightforwardly calculated by appropriate
normalization. This approach will work well if \(|\cY|\) and \(|\cW|\)
are both small, so that \(|\cY\times\cW|\) is of manageable size. If one
of these variables has many levels, however, directly estimating the
distribution of \(Y, W\mid R\) could be less efficient, as it does not
account for information about the marginal distributions \(Y\mid R\) and
\(W\mid R\).

An alternative approach is to first apply the proposed methodology to
estimate \(\Pr(W\mid R)\). This allows for calculation of model-updated
BISG probabilities
\(\tilde{\vb P}_{\mid W}=\pi(\vb R \mid \hat\Theta, \vb W, \vb G, \vb X, \vb S)\),
which are also computed as a byproduct of the EM algorithm described
above. Then, the methodology can be applied again, using
\(\tilde{\vb P}_{\mid W}\) as the input probabilities rather than the
original BISG probabilities, to estimate \(\Pr(Y\mid W, R)\). This
approach will likely perform better when \(W\) consists of multiple
predictors or if either \(|\cY|\) or \(|\cW|\) are large.

Both of these approaches require the following identifying assumption,
which generalizes Assumption \ref{a-ci-ys}.

\begin{assump}[Conditional independence of outcome, predictor and name]{CI-YWS}
For all $i$, $(Y_i, W_i)\indep S_i\mid R_i,G_i,X_i$, or, equivalently, \[
    W_i\indep S_i\mid R_i,G_i,X_i \qand
    Y_i\indep S_i\mid W_i,R_i,G_i,X_i. 
\]
\end{assump}

In the lending example, this would translate to the assumption that a
measure of creditworthiness is independent of last name after
controlling for race, location, and covariates, and that lending
decisions are independent of last names after controlling for
creditworthiness, race, location, and covariates.

\subsection{Addressing Potential Violations of the
Assumptions}\label{sec-sens}

BIRDiE crucially relies on Assumption \ref{a-ci-ys} for identification.
In addition, like the weighting and thresholding estimators, it also
relies upon Assumptions \ref{a-ci-sg} and \ref{a-acc}, which are
required for the BISG race probabilities to be accurate. Unfortunately,
these assumptions may not exactly hold in practice, and are also not
testable in observed data. In this section and Appendix
\ref{sec-app-sens}, we develop sensitivity analyses that assess how
violations of these assumptions affect the estimates of racial
disparities.

First, BIRDiE assumes that conditional on unobserved race and observed
covariates, outcomes and surnames are independent. As discussed in
Section~\ref{sec-ident}, however, association between the outcome and
country of origin or racial subgroups may lead to correlation between
surnames and outcome even after controlling for race and geography. To
address this problem, suppose that a low-dimensional summary statistic
of surname, \(f:\cS\to\R^d\), \(d\ll|\cS|\), is available, where \(f\)
may map each surname to a finer ethnic group within each racial
category. For example, Imai is a Japanese name whereas McCartan is a
name of Irish origin. If \(f\) can classify surnames into finer racial
subgroups or countries of origin---even approximately---then it can be
used to control for this channel of possible violations of Assumption
\ref{a-ci-ys}. Formally, we relax Assumption \ref{a-ci-ys} as follows.

\begin{assump}[Partial conditional independence of outcome and name]{CI-YSF}
For all $i$, $$Y_i\indep S_i\mid f(S_i),R_i,G_i,X_i.$$
\end{assump}

The next theorem shows that it is still possible to nonparametrically
identify racial disparities under Assumption \ref{a-ci-ysf} under the
identification condition, which is only slightly stronger than for
Theorem \ref{thm:id}.

\begin{restatable}[Nonparametric Identification Under Assumption \ref{a-ci-ysf}]{theorem}{thmidrel} \label{thm:id2}
Let $f:\cS\to\R^d$, $d< |\cS|$, with range $f(\cS)$.
For any given $g\in\cG$, $x\in\cX$, $z\in f(\cS)$, and $y\in\cY$, define a matrix $\vb P\in\R^{|\cS|\times|\cR|}$ with entries $p_{sr}=\Pr(R=r\mid G=g, X=x, S=s)$  and a vector $\vb b\in\R^{|\cS|}$ with entries $b_s=\Pr(Y=y\mid G=g, X=x, S=s)$.
Then under Assumption~\ref{a-ci-ysf}, and assuming knowledge of the joint distribution $\Pr(R,G,X,S)$, the conditional probabilities $\Pr(Y=y\mid R,f(S)=z,G=g, X=x)$ are identified if and only if both $\vb P$ and the augmented matrix $\mqty(\vb P&\vb b)$ have rank $|\cR|$.
\end{restatable}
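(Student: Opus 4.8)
The plan is to reduce Theorem~\ref{thm:id2} to the linear-algebra argument behind Theorem~\ref{thm:id}, carried out separately within each fiber of the map $f$. Fix $g\in\cG$, $x\in\cX$, $z\in f(\cS)$, and $y\in\cY$, and let $\cS_z\dfeq\{s\in\cS: f(s)=z\}$ be the set of surnames mapping to $z$; the matrix $\vb P$ and vector $\vb b$ in the statement are understood to range over $s\in\cS_z$. First I would apply the law of total probability, conditioning on $R$, to each $s\in\cS_z$:
\[
    \Pr(Y=y\mid S=s, G=g, X=x)
    = \sum_{r\in\cR}\Pr(Y=y\mid R=r, S=s, G=g, X=x)\,\Pr(R=r\mid S=s, G=g, X=x).
\]

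The crucial step is to invoke Assumption~\ref{a:indep-ysf}. Because $f(s)=z$ for every $s\in\cS_z$, the assumption $Y\indep S\mid f(S),R,G,X$ implies that conditioning on $S=s$ carries no additional information about $Y$ beyond $f(S)=z$ once $R,G,X$ are fixed, so
\[
    \Pr(Y=y\mid R=r, S=s, G=g, X=x) = \Pr(Y=y\mid R=r, f(S)=z, G=g, X=x) \dfeq m_r .
\]
The key consequence is that $m_r$ does \emph{not} depend on the particular $s\in\cS_z$. Substituting gives, for every $s\in\cS_z$, the relation $b_s=\sum_{r\in\cR}m_r\,p_{rs}$---exactly the fiberwise analogue of Equation~\eqref{eq:tprob}. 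These assemble into a single linear system of $|\cS_z|$ equations in the $|\cR|$ unknowns $\vb m=(m_r)_{r\in\cR}$, with coefficient matrix $\vb P$ (rows indexed by $s\in\cS_z$, columns by $r\in\cR$) and augmented matrix $\mqty(\vb P&\vb b)$. Since $\vb P$ is a function of the known joint law $\Pr(R,G,X,S)$ and $\vb b$ of the distribution of observables, identifying $\Pr(Y=y\mid R,f(S)=z,G=g,X=x)$ is precisely the question of whether this system determines $\vb m$ uniquely.

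The final step is the rank characterization, identical in form to Theorem~\ref{thm:id}. Consistency is automatic, since the true parameter vector solves the system, forcing $\vb b$ into the column space of $\vb P$ and hence $\operatorname{rank}(\vb P)=\operatorname{rank}\mqty(\vb P&\vb b)$. A consistent linear system has a unique solution if and only if its coefficient matrix has full column rank $|\cR|$. Combining these, the target is identified if and only if $\operatorname{rank}(\vb P)=\operatorname{rank}\mqty(\vb P&\vb b)=|\cR|$. For the forward direction I would observe that these equalities guarantee a unique solution, which must coincide with the true $\vb m$; for the converse, rank deficiency of $\vb P$ yields a nonzero $\vb v$ with $\vb P\vb v=\vb 0$, so $\vb m+\eps\vb v$ is a distinct solution for small $\eps$, defeating identification.

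I expect the only real subtlety---rather than a genuine obstacle---to lie in the crucial step: verifying that Assumption~\ref{a:indep-ysf} makes $m_r$ constant across the fiber $\cS_z$. This is exactly what lets the equations indexed by distinct $s\in\cS_z$ share common unknowns, so that the many surnames within a fiber over-determine and thereby identify the length-$|\cR|$ vector $\vb m$. Everything else is a verbatim re-run of the reasoning behind Theorem~\ref{thm:id}, now performed one value of $z$ at a time; the rank requirement is only slightly stronger because each $\vb m$ must be identified from the $|\cS_z|$ surnames in its own fiber rather than from all of $\cS$.
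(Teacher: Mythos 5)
Your proposal is correct and follows essentially the same route as the paper's proof: apply the law of total probability over $R$, use Assumption~\ref{a:indep-ysf} to replace $\Pr(Y=y\mid R=r,S=s,G=g,X=x)$ by a quantity depending on $s$ only through $f(s)$, and then characterize identification of the resulting linear system via the rank of $\vb P$ and $\mqty(\vb P&\vb b)$. If anything, your explicit restriction to the fiber $\cS_z=\{s:f(s)=z\}$ makes the argument slightly cleaner than the paper's, which leaves that bookkeeping implicit.
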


As long as the dimension \(d\) of the surname summary statistic \(f(S)\)
is much smaller than the (usually large) number of surnames \(|\cS|\),
racial disparities are likely to be identified under Theorem
\ref{thm:id2} when they are already identified under Theorem
\ref{thm:id}. Thus, Assumption \ref{a-ci-ysf} and Theorem \ref{thm:id2}
can be used in conjunction with carefully chosen \(f\) in order to probe
likely failure modes of the more restrictive Assumption \ref{a-ci-ys}.
If estimates are not much affected by the inclusion of \(f(S)\), then
researchers can be more confident in the plausibility of Assumption
\ref{a-ci-ys}.

Second, bias can also arise from violations of the assumptions
underlying the BISG methodology (Assumptions \ref{a-ci-sg} and
\ref{a-acc}). Of course, this is not unique to the proposed methodology:
violations of these assumptions will also affect the validity of other
disparity estimators such as weighting or thresholding. However, since
as discussed above the BISG assumptions may rarely hold exactly in
practice, we provide in Appendix \ref{sec-app-sens} several results
characterizing how the model's estimates are affected by bias in the
BISG probabilities.

\section{Empirical Validation with the Voter File}\label{sec-valid}

To better understand how BIRDiE performs in real-world contexts, we
apply it to North Carolina voter registration data. This voter file
contains individual-level self-reported race for almost all voters and
hence the ``ground truth'' relationship between outcome and race is
known. We compare the performance of BIRDiE models against those of the
weighting and thresholding estimators. We also evaluate how the
estimation error depends on the level of geographic precision used in
the BISG probabilities. Finally, we briefly demonstrate various
extensions of the BIRDiE methodology: small-area estimates (as discussed
in Section~\ref{sec-birdie}), improved individual race predictions
(Section~\ref{sec-update-bisg}), estimation conditional on an additional
explanatory variable (Section~\ref{sec-addlcov}), and sensitivity
analysis for potential assumption violation (Section~\ref{sec-sens}).

\subsection{North Carolina Voter File}\label{north-carolina-voter-file}

Like most other Southern states, which have a history of
disenfranchising minority voters, the state of North Carolina asks (and
previously required) every voter to self-report their race upon
registration. This data, along with voters' names, addresses, gender,
party registration (if any), and voting history, is part of the voter
file that the secretary of state makes publicly available. This feature
makes the voter file an ideal validation setting for the proposed
methodology. The outcome we examine here, party registration, is the
product of many unobservable factors, and is known to differ across
racial groups. Since self-reported race is available, inferences about
these racial disparities using the estimators discussed here can be
compared to the corresponding ground truth.

Estimation of party registration by race is of substantive interest as
well, especially in the context of the Voting Rights Act of 1965 (VRA).
The relationship between these variables is critical for understanding
the impact of policy changes such as redistricting or election rules on
compliance with the VRA, and for establishing legal standing to
challenge these policies under the VRA. As many states do not ask for
self-reported race during voter registration, methods like BIRDiE are
important tools for evaluating VRA compliance.

We use a subset of the October 2022 voter file which could be linked to
a proprietary voter file provided by L2, Inc., a leading national
non-partisan firm and the oldest organization in the United States that
supplies voter data and related technology to candidates, political
parties, pollsters, and consultants for use in campaigns. The L2 file
geocoded each address to a Census block, which allows for the finest
block-level BISG predictions. We also removed any records without
individual race information, since our goal is validation compared to
some ground truth, rather than inference about the entire population of
registered North Carolina voters. Altogether, 22.1\% of records either
had missing race information or could not be linked to the L2 file.

\begin{figure}

\centering{

\includegraphics[width=6in,height=\textheight]{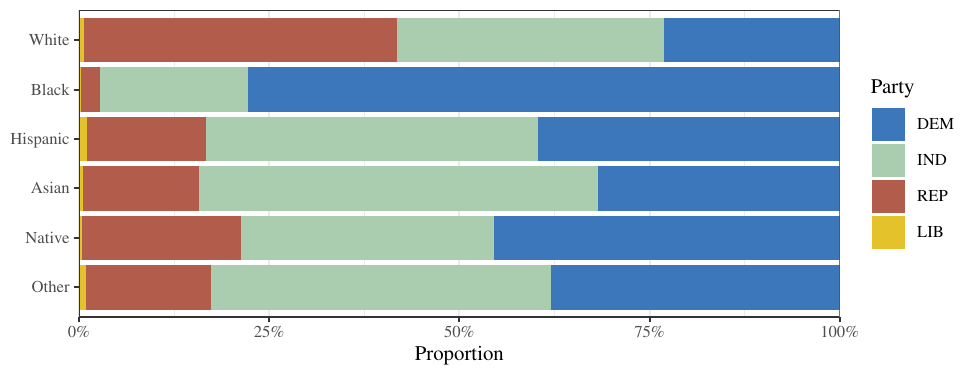}

}

\caption[Distribution of party registration by race for North Carolina
voters]{\label{fig-nc-overview}Distribution of party registration by
race for a sample of 1,000,000 North Carolina voters. Parties are
Libertarian (LIB), Republican (REP), Independent (IND), and Democratic
(DEM).}

\end{figure}%

The overall merged voter file contains 5,754,912 voters, 71.1\% of which
are White, 21.1\% of which are Black, and 7.8\% of which belong to
another race. To reduce computational burden, we further subsampled this
file by selecting 1,000,000 records at random without replacement. This
sample size is large enough to ensure that sampling error in the
estimates is negligibly small.

Figure~\ref{fig-nc-overview} shows the distribution of party
registration by self-reported race in this subsample. White voters
disproportionately register as Republicans, while Black voters
disproportionately register as Democrats. This serves as the
ground-truth in our validation analysis presented below.

\subsection{The Model Setup}\label{the-model-setup}

We first calculate BISG probabilities using 2010 Census data at the
census block, tract, ZIP code tabulation area (ZCTA), and county level.
Every record in the voter file contains county information, while
roughly 13\% of records are missing ZIP codes and 27\% of records are
missing blocks/tracts; when these finer geographic identifiers were
missing, we used county-level Census tables in the BISG calculations.

The BISG probabilities are broadly accurate. Using the maximum a
posteriori racial category as a prediction, we obtain accuracy of 76.2\%
for the county probabilities, 78.4\% for the ZIP code probabilities,
78.5\% for the tract probabilities, and 79.6\% for the block
probabilities. An alternative measure of the quality of the BISG
probabilities is the logarithmic score, a proper scoring rule which
rewards precise and calibrated probabilistic estimates (higher values
are better). The logarithmic scores for the BISG probabilities are
--0.618 for counties, --0.587 for ZIP codes, --0.585 for tracts, and
--0.607 for blocks. For comparison, the prior-only logarithmic score
(i.e., using no name or geographic information) is --0.867. The worsened
performance for block-level versus tract-level probabilities likely
stems from the larger impact of census measurement error at smaller
geographies, a problem that could be addressed using newer BISG methods
such as those of \citet{imai2022addressing}.

Since the goal of our validation study is to compare BIRDiE estimates
with weighting and thresholding estimates, we do not make additional
comparisons between BISG probabilities and those generated with
alternative racial prediction methods. To the extent competing racial
prediction methods improve prediction accuracy, we expect the gap
between different disparity estimation methods (weighting, thresholding,
BIRDiE) to narrow, consistent with Theorems \ref{thm:wt-bias} and
\ref{thm:wtd-vs-ols}. As we have discussed, however, high accuracy of
racial prediction alone is neither necessary nor sufficient for accurate
estimation of racial disparities. If other racial prediction methods
produce increased accuracy at the cost of worsened calibration, accuracy
in estimating racial disparities may be poor whether using weighting,
thresholding, or BIRDiE estimates.

In our validation, for a given set of BISG probabilities, we estimate
the conditional distribution of each outcome variable given race using
BIRDiE with both saturated pooling and multinomial mixed-effects models
introduced above. We then compare the resulting estimates based on these
BIRDiE models against those of the two existing estimators---the
weighting estimator as well as a thresholding estimator that
deterministically assigns each individual the maximum \emph{a
posteriori} racial category. For the saturated BIRDiE model, we use
geographic effects matching the geographic level used in the BISG
probabilities (e.g., county effects for the county-level BISG
probabilities), except for the block-level probabilities. Due to the
large number of individual census blocks, we use tract-level effects
instead for this particular model. For the mixed-effects BIRDiE model,
in addition to these geographic effects we add two geography-level
covariates: the White and Black fraction of the population in each
individual's geography of residence. These covariates should help
further regularize and share information among the individual geographic
effects.

We use noninformative or weakly informative priors for both BIRDiE
models. For the saturated model, we use a uniform prior for the
Dirichlet hyperparameters. For the mixed model, the prior on the fixed
effects \(\vb*\beta_r\) for each racial group is a weakly informative
Normal with standard deviation \(2p_r\), where \(p_r\) is the share of
the racial group in the sample; this encourages more shrinkage for
groups where less data is available. The overall global intercept
received a \(\Norm(0, 5^2)\) prior. The prior on the random intercept
scale is \(\text{Inv-Gamma}(4, 1.5)\), designed to support a range of
possible heterogeneity of the outcome-race relationship across
geographic levels while discouraging the M step from finding a mode at
zero; it places 95\% of its mass between 0.17 and 1.376. The random
intercept correlation matrix (across levels of \(Y\)) received an LKJ
prior with shape parameter 2. Variations of these prior choices did not
noticeably affect the top-line estimates, however, due to the large
quantity of data overall. To give an idea of the computational
efficiency of the proposed method, the maximum runtime of the saturated
of the BIRDiE models fit to estimate party registration was 6.9 seconds
(on a modern laptop with 8GB RAM), and the maximum runtime for the mixed
model estimates was 22.6 seconds.

\subsection{Estimates of Racial Disparity in Party
Registration}\label{estimates-of-racial-disparity-in-party-registration}

\begin{figure}

\centering{

\includegraphics[width=1\textwidth,height=\textheight]{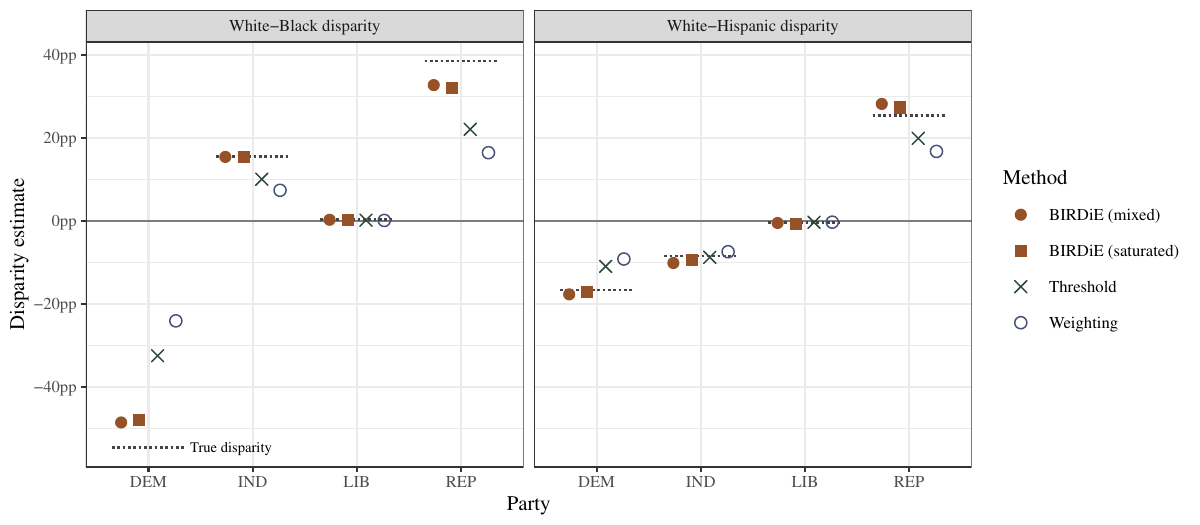}

}

\caption[Error in racial disparity estimates for party registration, by
estimation method]{\label{fig-nc-disp}Error in the White-Black and
White-Hispanic disparity estimates for party registration, by estimation
method. The true disparities are indicated by the dotted lines. All
methods used block-level data for this figure; the results for other
levels of geographic detail are generally similar. Estimation
uncertainty is minimal and hence suppressed from the figure for
clarity.}

\end{figure}%

We first examine the relative accuracy of the proposed methods in
estimating the disparity between White and Black, and White and Hispanic
voters, in party registration. For example, the true difference in
Democratic registration between Black and White voters in the sample is
\(54.6%
\) percentage points, meaning Black voters register Democratic at a much
higher rate. However, the standard weighting approach produces an
estimate of only \(24.1%
\) percentage points for this disparity---less than half the true value.
This is consistent with Corollary \ref{cor:under}, which states that the
weighting estimator tends to underestimate the magnitude of racial
disparity. The thresholding estimator, while slightly better, also
misses the mark, with an estimate of \(32.5%
\) percentage points. In contrast, the saturated BIRDiE model produces
an estimate of \(47.9%
\) percentage points, and the mixed BIRDiE model estimates \(48.5%
\). These estimates are only slightly lower than the ground truth.

Figure~\ref{fig-nc-disp} compares the empirical performance of the
BIRDiE models against that of the weighting and thresholding estimators
across all of these possible disparity measurements, using the
block-level BISG predictions. For White--Black (left plot) and
White-Hispanic (right plot) disparities in party registration, the
BIRDiE models (solid circles and squares) substantially outperform the
two commonly used estimators (open circles and crosses). For two major
parties, both the weighting and thresholding estimators exhibit a
substantial amount of estimation error, for example, exceeding 20
percentage points for the White-Black disparity for the Democratic
party. In contrast, the two BIRDiE models yield a much smaller
estimation error that ranges within several percentage points for all
racial disparity estimates. The saturated and mixed-effects BIRDiE
models perform similarly with no discernible difference.

\begin{figure}

\centering{

\includegraphics[width=1\textwidth,height=\textheight]{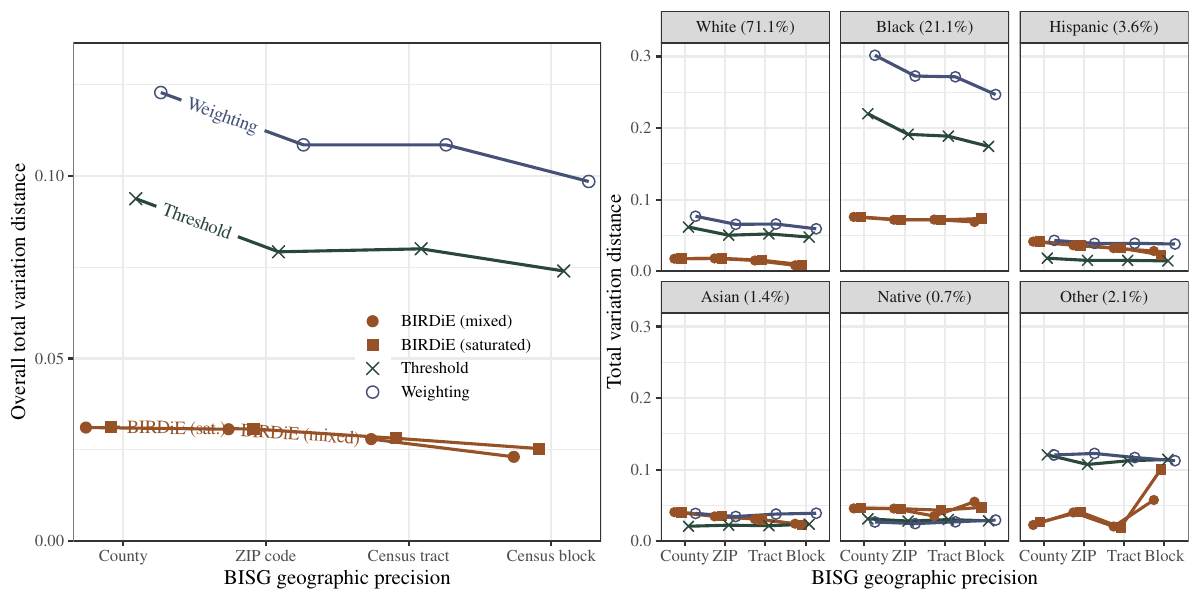}

}

\caption[Total variation distance between the estimated and actual
distribution of party registration, by estimation method and level of
geographic detail used in the BISG predictions]{\label{fig-nc-tv}Total
variation distance between the estimated and actual distribution of
party registration, by estimation method and level of geographic detail
used in the BISG predictions. The left plot shows the overall total
variation distance, whlie the right plot decomposes it by racial group.}

\end{figure}%

For a more comprehensive look at the error in the estimated
partisanship-by-race distributions, we turn to the total variation (TV)
distance, which is calculated as \[
d_\text{TV}(\hat{\vb*\mu}_{Y|R}, \vb*\mu_{Y|R}) 
= \half\sum_{y\in\cY}\sum_{r\in\cR} |\hat{\mu}_{Y,R}(y, r) - \mu_{Y,R}(y, r)|,
\] where \(\mu_{Y,R}\) is the joint distribution of \(Y\) and \(R\). The
TV distance is an upper bound on the error in \emph{any} probability
calculated from the estimated joint distribution, and as such is useful
general-purpose measure of estimation error. The left plot of
Figure~\ref{fig-nc-tv} shows the TV distance for each estimator, not
just for the block-level BISG estimates used in Figure~\ref{fig-nc-disp}
but also across the range of geographic levels used in the BISG
calculation (x-axis).

We find that both BIRDiE models substantially outperform every
alternative method at every geographic level. In general, the estimates
based on the BIRDiE models exhibit a total variation distance whose
magnitude is about one third and one fourth of that for the thresholding
and weighting estimators, respectively. As before, the saturated and
mixed-effects BIRDiE models perform similarly. According to this
measure, we find that finer geographic data provide only minor
improvements in accuracy for the BIRDiE or conventional estimates. While
possibly counterintuitive, this finding underscores the fact that
calibrated BISG probabilities, rather than highly precise probabilities,
are all that is needed for accurate disparity estimation.

Of course, both calibrated and precise probabilities are to be preferred
to imprecise but calibrated probabilities. In practice, however, there
may be a tradeoff between the two. For example, including first names in
the BISG predictions may increase their precision. But, first names may
lead to worse calibration, since BISG methods which use surnames make an
somewhat unrealistic conditional independence assumption, and data on
first names by race come from non-census sources
\citep{tzioumis2018demographic, rosenman2023firstname}. Additionally,
unlike surnames, first names (which are usually chosen by parents) can
be more correlated with socioeconomic variables, leading to violations
of Assumptions \ref{a-ci-ys}.

Lastly, we measure the TV distance for each conditional distribution by
race: \[
d_\text{TV}^{(r)}(\hat{\vb*\mu}_{Y|r}, \vb*\mu_{Y|r}) 
= \half\sum_{y\in\cY} |\hat{\mu}_{Y,r}(y, r) - \mu_{Y,r}(y, r)|.
\] The right plot of Figure~\ref{fig-nc-tv} shows these within-race TV
distances, to illuminate how the estimators perform on each subgroup. In
general, the BIRDiE models are more accurate than the weighting and
thresholding estimators for the White and Black racial groups which
together make up 96\% of the sample. All estimators perform roughly
equally well for Hispanic, Asian, and Native voters (though the
thresholding estimator performs well for Hispanic voters), exhibiting
relatively small estimation error. The weighting and thresholding
estimators perform particularly poorly for Black voters and for the
``Other'' racial group.

\subsection{Small-area Estimation}\label{sec-small}

An advantage of the BIRDiE methodology is its explicit modeling of
\(\Pr(Y\mid R,G,X)\), which produces not only estimates of the marginal
\(\Pr(Y\mid R)\) but also subgroup estimates of how these conditional
distributions vary across covariates and geographic areas. This section
examines the accuracy of the saturated and mixed-effects BIRDiE models
in recovering small-area relationships between party registration and
race, compared to standard methodology that simply applies the weighting
and thresholding estimators within each geographic area. We study
accuracy at the county, ZIP code, and tract level, using the BISG
probabilities and BIRDiE models that were applied to each. Since in
fitting the BIRDiE model to block-level BISG probabilities we used
tract-level random intercepts, we do not present block-level estimates.

\begin{figure}

\centering{

\includegraphics[width=5.5in,height=\textheight]{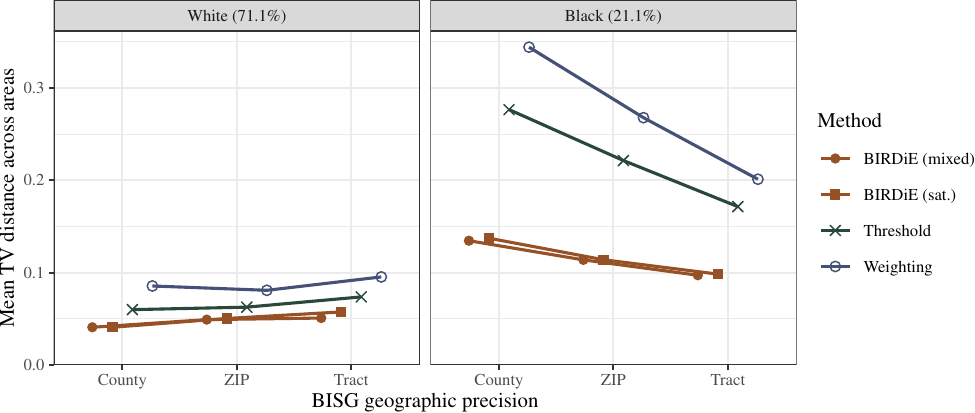}

}

\caption[Accuracy of small-area party registration estimates by
race]{\label{fig-nc-smallarea}Accuracy of small-area estimates by race,
as measured by the average total variation distance.}

\end{figure}%

We evaluate the small-area estimates by calculating the mean total
variation distance between the estimated and true conditional
distributions of party registration by race (averaging across geographic
areas). Figure~\ref{fig-nc-smallarea} summarizes our results, which
qualitatively track the patterns found overall in
Figure~\ref{fig-nc-tv}. The two BIRDiE models exhibit substantially
lower error than the weighting and thresholding estimators. Across all
methods, the error is lower for White voters, who make up the bulk of
the sample. Somewhat surprisingly, the amount of error does not appear
to vary much for the BIRDiE models across different levels of
geography---tract-level estimates are roughly as accurate as
county-level estimates, on average.

Between the BIRDiE models, the mixed-effects model slightly outperforms
the saturated model. This reflects the value in partially pooling
estimates through the random effect structure.

\subsection{Improved Individual Race
Probabilities}\label{improved-individual-race-probabilities}

As discussed in Section~\ref{sec-update-bisg}, we can use the
conditional distribution \(\Pr(Y\mid R, X, G)\) estimated with a BIRDiE
model to create model-updated BISG probabilities
\(\tilde{\vb P}_{\mid Y}=\pi(\vb R \mid \hat\Theta, \vb Y, \vb G, \vb X, \vb S)\)
by applying Bayes' rule. These updated probabilities may be more
accurate than the original BISG probabilities.

\begin{figure}

\centering{

\includegraphics[width=1\textwidth,height=\textheight]{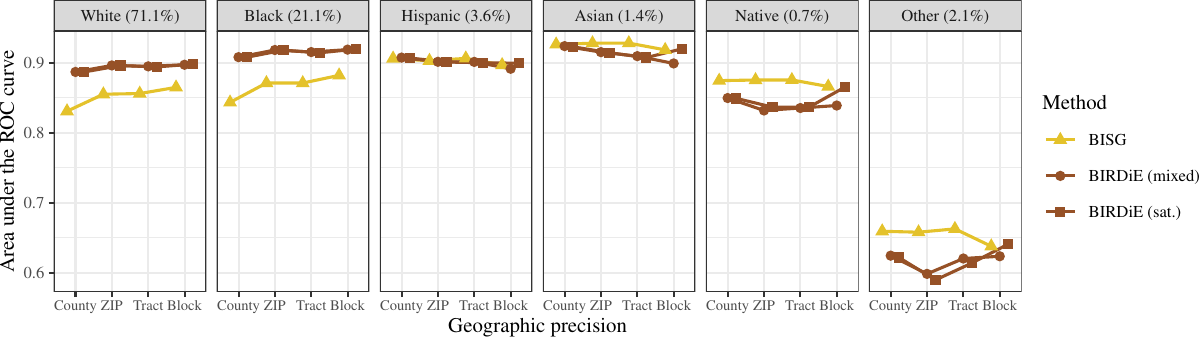}

}

\caption[Race probability predictive accuracy for the input BISG and
BIRDiE-updated probabilities, by race and level of geographic
precision]{\label{fig-nc-roc}Race probability predictive accuracy, as
measured by the area under the receiver operating characteristic (ROC)
curve, for the input BISG probabilities as well as the BIRDiE-updated
probabilities, by race and level of geographic precision. Larger values
indicate more precise predicions.}

\end{figure}%

For example, using the estimates produced by the mixed BIRDiE model
applied to party registration with block-level BISG estimates, the MAP
prediction accuracy increases from 79.6\% with the input probabilities
to 82.9\% with the updated probabilities. These increases are
significantly larger than the differences in accuracy between BISG
probabilities using different levels of geographic precision.

The improvements are reflected in other accuracy measures as well.
Figure~\ref{fig-nc-roc} shows the accuracy of the predictions by race,
as measured by the area under the receiver operating characteristic
(ROC) curve. The updated probabilities are noticeably more accurate than
the input probabilities for White and Black voters, about as accurate
for Hispanic and Asian voters, and slightly less accurate for Native and
``Other'' voters for some geographic levels.

\subsection{Estimates Conditional on an Additional
Variable}\label{estimates-conditional-on-an-additional-variable}

The North Carolina voter file also provides an opportunity to
demonstrate the methodology described in Section~\ref{sec-addlcov} to
produce estimates conditional on another predictor variable that is not
used in the BISG probabilities. We will estimate party registration
rates by race among voters and nonvoters in the 2020 election. Following
the discussion in Section~\ref{sec-addlcov}, we will compute these
estimates two ways: (1) by estimating party registration and 2020
turnout jointly by race, and (2) by first estimating 2020 turnout by
race, then estimating party registration by race and 2020 turnout.

We will use a multinomial mixed-effects BIRDiE model applied to the
block-level BISG probabilities, with random intercepts by tracts, for
all the estimation. Fitting this model to a combined 2020 turnout/party
variable (i.e., one with eight levels: no/DEM, yes/DEM, and so on)
produces estimates of the joint distribution of party registration and
turnout in 2020 by race. Normalizing these probabilities within turnout
and race groups produces estimates of party registration by race and
turnout. The total variation distance between these estimates and the
true distribution is 0.051, which indicates close agreement. The
accuracy of the most-likely race predictions from the BISG probabilities
updated with both party and turnout is about the same as the party-alone
accuracy, indicating that turnout is perhaps less correlated with race
after controlling for location and party registration.

\begin{figure}

\centering{

\includegraphics[width=1\textwidth,height=\textheight]{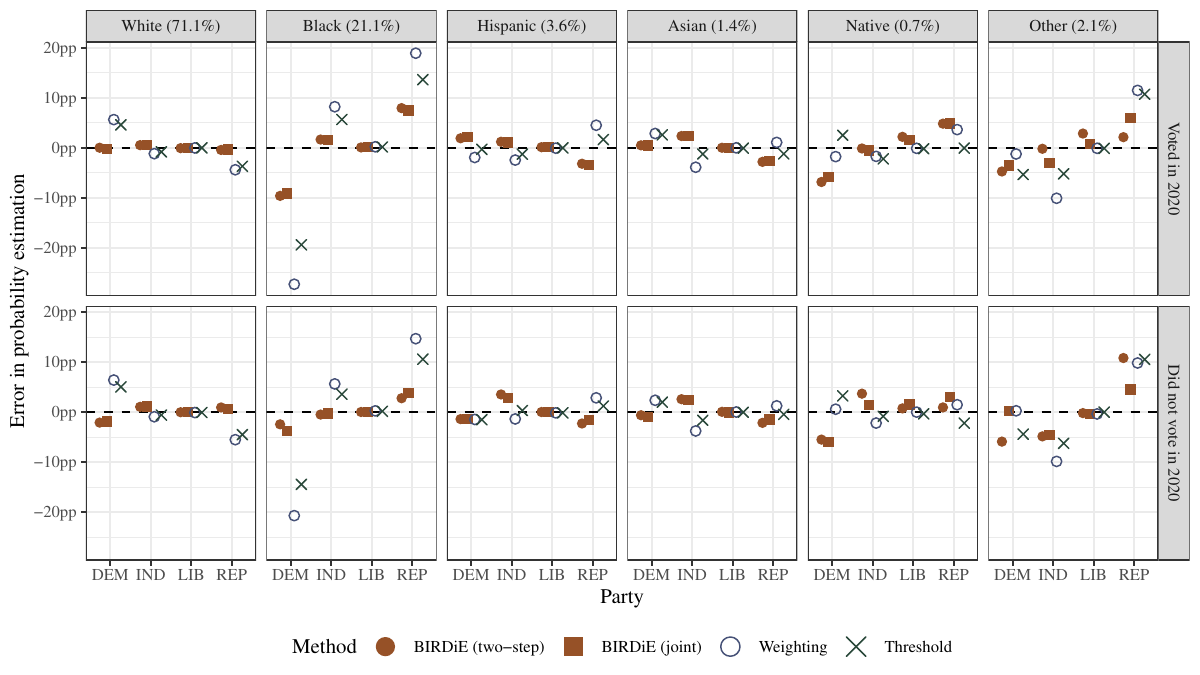}

}

\caption[Error in the conditional expectation estimates for party
registration by turnout and race, by estimation
method]{\label{fig-nc-addlcov}Error in the conditional expectation
estimates for party registration by turnout and race, by estimation
method. All methods used block-level data for this figure. Estimation
uncertainty is minimal and hence suppressed from the figure for
clarity.}

\end{figure}%

Figure~\ref{fig-nc-addlcov} presents the errors in the estimates for
this method (solid square), the two-step method (solid circle), and for
the weighting (open circle) and thresholding (cross) estimators. The TV
distance between these two-stage estimates and the true distribution is
0.052, very similar to the error in the joint-estimation approach. The
two-step approach produces similar estimates to the joint-estimation
approach, though the latter performs better in the ``Other'' category.
As discussed in Section~\ref{sec-addlcov}, while both approaches produce
highly accurate estimates in this example, we would expect the two-stage
approach to be superior when one or both of the variables has more
levels.

In contrast to the BIRDiE models, the weighting and thresholding
estimates of party registration by 2020 turnout and race include large
errors, especially for Black voters with all the errors exceeding 10
percentage points. The TV distance for the weighting estimator is 0.196,
and the distance for the thresholding estimator is 0.147---around 3--4
times higher than for the estimates based on the BIRDiE models.

\subsection{Sensitivity Analysis}\label{sec-nc-sens}

Finally, we examine the sensitivity of our party registration estimates
to potential violations of the key identifying Assumption \ref{a-ci-ys},
following the method outlined in Section~\ref{sec-sens} that is based on
a low-dimensional summary statistic of surnames. We use a publicly
available sample of 5\% of the individual records for the 1930 Census
\citep{cens1930}, which contains individual names, individual and
parental birthplace, and detailed race, ethnicity, and tribal codes.
Since many regions of Asia, particularly Vietnam, experienced little
emigration to the United States before 1930, we further supplement this
data with around 3,000 Asian surnames classified into six regional
subgroups: Chinese, Filipino, Indian, Japanese, Korean, Vietnamese,
NHPI, and Other \citep{asiannames}.

Using these subgroups and the 1930 birthplace and racial data, we can
classify most surnames in the voter file into nine groups (see Appendix
\ref{sec-app-surgrp} for a brief description of the groupings and the
most common 50 surnames for each group). While somewhat arbitrary, these
groups are chosen to combine countries of origin which had significant
immigration to the United States during similar periods.

We first evaluate the plausibility of Assumption \ref{a-ci-ys} by
examining the correlation between the residuals of the BIRDiE model fit
and indicator variables for each of the nine surname groups. Under
Assumption \ref{a-ci-ys}, this residual correlation should be zero
everywhere. As Figure~\ref{fig-nc-sens} shows, however, for many groups
and party labels, the correlation is small but deviates from zero more
than would be expected given only sampling variation. Here, we use the
residuals from the county-level saturated model specification, but the
results are not sensitive to this choice.

\begin{figure}

\centering{

\includegraphics[width=1\textwidth,height=\textheight]{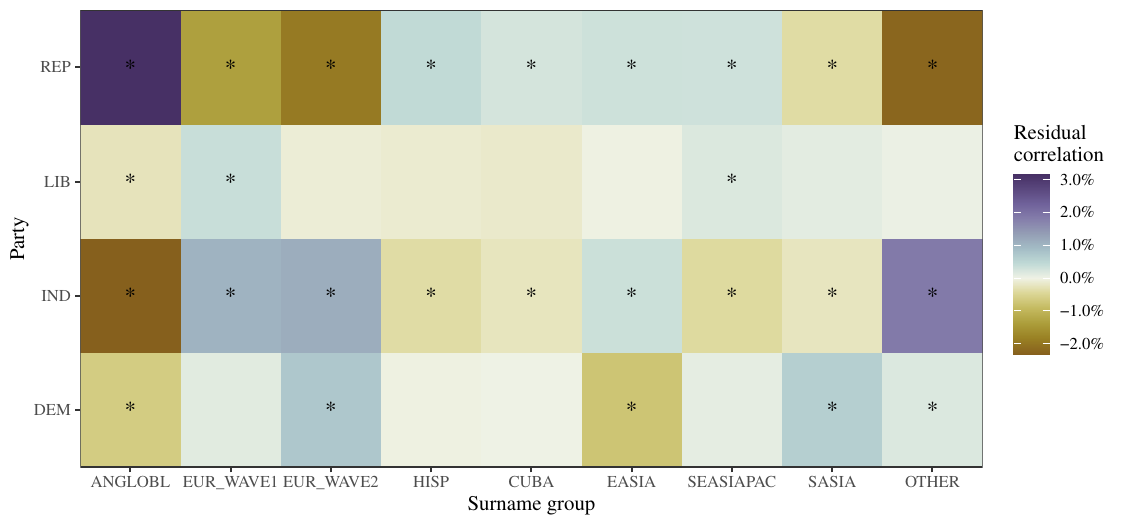}

}

\caption[Residual correlation between party registration and nine
surname groups in North Carolina.]{\label{fig-nc-sens}Residual
correlation between party registration and nine surname groups, after
controlling for race and location. Correlations whose 90\% confidence
intervals exclude zero are marked with an asterisk. See Appendix
\ref{sec-app-surgrp} for details on the surname groups.}

\end{figure}%

Notably, voters with names in the Anglosphere and Black surname group,
which includes surnames that are relatively more common among
many-generation residents of the U.S., such as Smith, Williams, and
Brown, are significantly less likely to register as Democrats and
independents, and more likely to identify as Republicans, even after
controlling for race and geography. Meanwhile, voters with names in the
First and Second wave European immigration surname groups, which include
surnames more common among 19th and 20th century immigrants from Europe,
display the opposite pattern. Differences among surname groups designed
to correlate with membership in various Asian subgroups are also
visible. As might be expected, the relatively many significant
correlations are indicative more of the large number of observations in
the data, rather than large residual correlations themselves---all of
the correlations are quite small in magnitude, with most on the order of
0.01 or so. Thus, we expect our party-by-race estimates to be little
affected by the inclusion of the surname group indicators.

Indeed, re-fitting the county-level saturated model with an additional
surname group covariate produces nearly identical results, albeit at a
moderately higher computational cost given the increased number of
\((G, X, f(S))\) cells. This re-fitting requires Assumption
\ref{a-ci-ysf}, which relaxes Assumption \ref{a-ci-ys}. We find that the
average party registration rate estimate changes only by 0.010, with the
largest change being 0.036, for the rate of Republican registration
among Other voters. The accuracy of the updated BISG probabilities is
likewise virtually unchanged. All in all, this analysis provides
confidence that violations of Assumption \ref{a-ci-ys} for the North
Carolina voter file are likely minor and would have minimal effects on
quantitative and qualitative conclusions.

\section{Application to Tax Data}\label{sec-irs}

We further demonstrate the use of the proposed methodology by applying
it to study racial differences in the home mortgage interest deduction
(HMID) claims, using individual-level tax data from the U.S. Internal
Revenue Service (IRS). This data includes the universe of income tax
returns filed by U.S. taxpayers. The IRS does not collect, and the data
does not include, information on taxpayer race or ethnicity.

\subsection{Home Mortgage Interest Deduction
(HMID)}\label{home-mortgage-interest-deduction-hmid}

The HMID is designed to incentivize home ownership; homeowners with a
mortgage qualify for an itemized deduction based on the amount of
mortgage interest they pay during the year. The deduction is only
available for taxpayers who itemize their deductions. Following
increases to the standard deduction and other tax code changes as part
of the ``Tax Cuts and Jobs Act'' of 2017 (P.L. 115-97), roughly 90\% of
taxpayers take the standard deduction and do not itemize. These
taxpayers are unable to take advantage of the HMID.

The Treasury Department estimates the HMID costs the government about
\$25 billion in foregone revenue in 2019. By budgetary cost, the
deduction is the largest in the income tax code
\citep{crs2017deductions}. Because it is only available to homeowners,
the HMID may disproportionately benefit taxpayers of racial groups that
have a high homeownership rate. Prominent legal scholars have criticized
the possible disproportionate benefits, with
\citet[p.~94]{brown2022whiteness} referring to the subsidy from the HMID
as ``little more than the twenty-first-century version of redlining''
and concluding it ``must be repealed.'' On the other hand, it is also
possible that if Black homeowners faced higher mortgage rates, they
could in principle benefit more than would be expected based on
homeownership rates alone. Lack of administrative data on HMID claims by
race made it difficult to quantify racial disparities that potentially
exist.

The Treasury's internal Office of Tax Analysis has recently used an
extension of the standard BISG model to estimate the usage the HMID and
other deductions by race from individual-level data
\citep{cronin2023bisg}. External researchers have also studied HMID
usage by analyzing survey data or data on proxies like home ownership
\citep{sullivan2017hmid}. Both types of analyses have found that White
taxpayers benefit disproportionately from the HMID, though the magnitude
of the disparity is unclear, especially given the methodological
challenges identified in this paper. Here, we hope to use the BIRDiE
methodology to provide a more precise answer to the question of which
groups are using the HMID and how much they benefit from it.

\subsection{Estimation Procedure}\label{estimation-procedure}

We use a random 10\% sample of individual tax returns (Form 1040s) filed
with the IRS for tax year 2019, a total of 17,145,898 observations. To
calculate individual race probabilities for every observation, we use
the ZIP code tabulation area (ZCTA) corresponding to the geocoded
address listed on the return, plus the last name of the primary filer
using a standard BISG model. This means that conclusions about racial
groups here refer to the race of the primary filer, and not the race of
other household members. For the the roughly 3.4 million records for
which geocoding was not successful, only last names were used.

The outcome variable is the amount of the HMID claimed by the filer,
discretized into 11 levels: one for a deduction of \$0, capturing
roughly 90\% of the sample, and ten levels corresponding to the deciles
of the HMID among those taking the deduction. Given the size of the
data, our outcome model is the relatively simple no-pooling model for
HMID level by geography and racial group. We coarsen the geography
variable used for modeling to the Public Use Microdata Area (PUMA)
level.

Using a larger geography like PUMAs rather than ZCTAs is necessary to
ensure a reasonable amount of data in each outcome-race-geography cell
of the outcome model. PUMAs partition each state into areas containing
roughly 100,000 people. Compared to alternative units of analysis like
states or counties, PUMAs are therefore adaptive to population density;
a large city might be contained entirely in a single county, but with a
dozen or more distinct PUMAs, while the surrounding rural areas might be
spread over many counties but only a few PUMAs. Since we expect more
geographic heterogeneity in and around cities, this feature of PUMAs
lends itself well to our analysis. Where PUMA was not available for the
3.4 million missing geocodes, we used an indicator for state of
residence instead, which is available for every record in the sample.

There are 1,961 distinct PUMAs (or states) in the sample, compared to
28,880 ZCTAs. With 10 non-zero outcome levels and 6 racial groups, the
typical cell in the outcome model is expected to have around 14
observations. Smaller racial groups would have fewer observations than
this, on average. Therefore, to help regularize the PUMA-level model
estimates, we impose a weak empirical Bayes prior based on a global
estimate of HMID by race from the simple weighted estimator. The
effective data size of the prior is just 0.1 observations per racial
group. Thus the prior will have a meaningful impact only in those areas
where there are close to zero expected members of a racial group,
according to the BISG probabilities.

\begin{figure}

\centering{

\includegraphics[width=1\textwidth,height=\textheight]{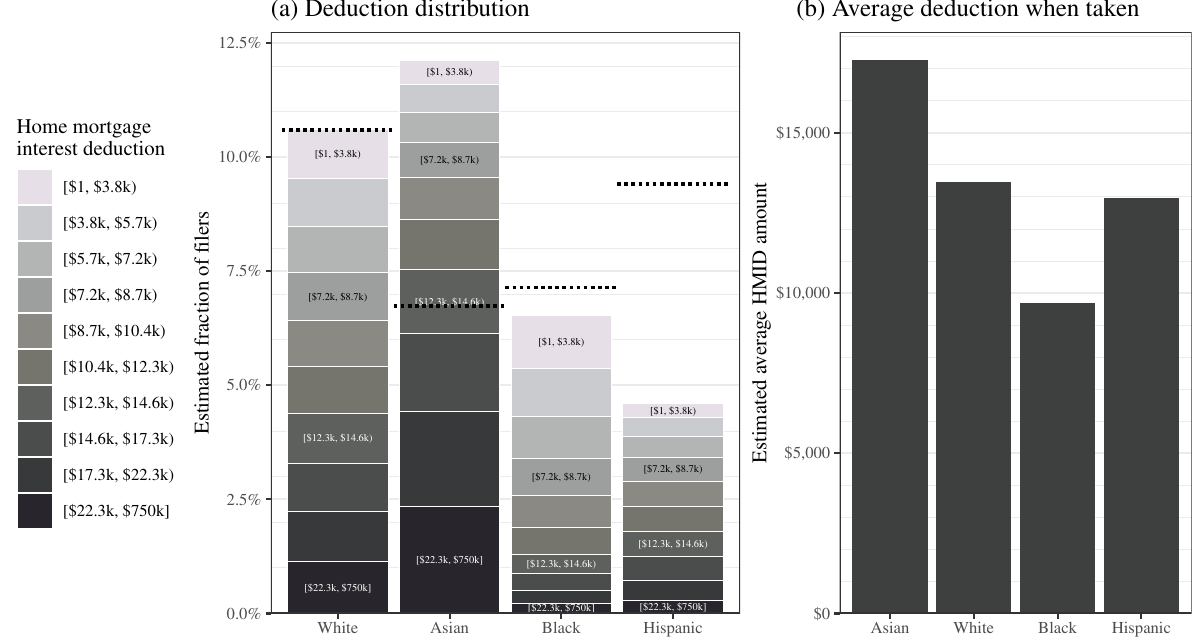}

}

\caption{\label{fig-irs-hmid}Estimates of usage of the home mortgage
interest deduction by race. Panel (a) shows estimates of the average
size of the deduction among filers who took the deduction at all. Panel
(b) shows estimates of the proportion of filers who took the deduction,
further broken into deciles by deduction amount. The dotted lines mark
the expected height of each bar if the only disparities were those in
mortgage rates between racial groups.}

\end{figure}%

\subsection{Findings}\label{findings}

Figure~\ref{fig-irs-hmid}(a) shows the estimated proportion of filers in
each racial group which take the HMID at all, and the distribution of
the HMID claim amount among those who do. Racial disparities are
immediately apparent: while 10.6\% of White filers take the HMID, just
6.5\% of Black filers and 4.6\% of Hispanic filers do. In contrast,
roughly 12\% of Asian filers take the deduction.

How much of this disparity is explained by differences in home ownership
rates across racial groups? We develop estimates of the fraction of each
racial group that has a mortgage for the home they live in based on data
from the 2010 decennial census (see Appendix~\ref{sec-app-mortg} for
details). We then plot as dashed lines in Figure~\ref{fig-irs-hmid}(a)
estimates of the fraction of each racial group that would claim the HMID
if every filer with a mortgage claimed the deduction at the same rate as
White filers do. The figure shows that the lower share of Black
taxpayers claiming the HMID can be explained by the lower home ownership
rate among that group. However, for other groups, disparities remain
after controlling for having a mortgage. In particular, Hispanic filers
claim the HMID at a 2.5 percentage point lower rate than would be
expected based on mortgage rates alone, and Asian filers claim the HMID
at a higher rate than their share of the population with mortgages would
imply. These results suggest that closing disparities in home ownership
may not be sufficient to eliminate disparities in who benefits from the
HMID, as evidently other aspects of filers' situations beyond home
ownership, such as eligibility for other itemized deductions, are
affecting whether their HMID benefits.

\begin{table}[b]

\caption{\label{tab:irs-amt}Estimates of the HMID claim rate and the average deduction by racial group.}
\centering
\begin{tabular}[t]{lllll}
\toprule
Average claim & White & Black & Hispanic & Asian\\
\midrule
Rate & 10.6\% & 6.5\% & 4.6\% & 12\%\\
Among claimants & \$13,500 & \$9,700 & \$13,000 & \$17,300\\
Unconditionally & \$1,400 & \$600 & \$600 & \$2,100\\
\bottomrule
\end{tabular}
\end{table}

Beyond disparities in the rate that taxpayers claim any HMID, there are
also racial differences in the amount of the HMID among claimants. This
is apparent in both the distribution across HMID deciles in
Figure~\ref{fig-irs-hmid}(a) as well as in Figure~\ref{fig-irs-hmid}(b),
which displays estimates of the mean HMID amount among filers who claim
the deduction. These estimates were produced by weighting the observed
HMID amounts according to BIRDiE-updated race probabilities, as
described in Section~\ref{sec-update-bisg}. Compared to White claimants,
whose average deduction is \$13,500, HMID amounts for Black claimants
are skewed toward the lower deciles, translating to a \$2,100 to \$3,800
lower average HMID amount for these groups. In contrast, Hispanic
claimants deduct just \$500 less than White claimants, and Asian
claimants deduct \$3,800 more, on average. Asian filers' deductions fall
into the highest deciles at more than double the rate of any other
group. In fact, a higher fraction of Asian filers take at least a
\$17,000 deduction than the fraction of Hispanic filers who take any
deduction at all. Table \ref{tab:irs-amt} compares these estimates for
claimants to the unconditional averages of the HMID benefit amount
across racial groups.

Overall, our findings support the claims of researchers such as
\citet{moran1996black} and \citet{brown2022whiteness} and others that
the HMID is disproportionately unavailable to Black and Hispanic
taxpayers. Our estimates show that the picture is complicated further by
differences between racial groups even accounting for the prevalence of
mortgages. In addition, the pattern of disparities in overall HMID
claims looks different from the disparities in the amount of the HMID
among claimants.

\section{Discussion}\label{sec-disc}

We have introduced a new identifying assumption and accompanying model,
BIRDiE, and clarified other assumptions implicit in approaches to
disparity estimation when individual race is not observed. In many
real-world applications, we believe that the new model and
identification condition are appropriate and will produce significantly
improved estimates. However, there is no one-size-fits-all approach for
the estimation of racial disparities. For example, the existence of
name-based discrimination may violate our identification assumption
especially when racial categories, for which data are available, are
coarse. Although we provide a sensitivity analysis that partially
addresses this concern, careful consideration of the underlying causal
and information structure is required to avoid making the incorrect
conclusions.

As our empirical demonstration shows, in realistic settings BIRDiE can
substantially outperform existing estimators of racial disparities, both
in aggregate and for small areas. The BIRDiE methodology also produces
improved BISG probabilities, and can be used to estimate disparities
conditional on other variables. These additional features should prove
helpful in practical settings.

\subsection{Recommendations for
Practitioners}\label{recommendations-for-practitioners}

Given the large amount of missing data inherent to the type of racial
disparity estimation studied here, choices about data selection,
processing, and modeling can have a significant impact on estimates and
substantive conclusions. We collect here several recommendations for
practitioners using BISG and BIRDiE methodology in their research.

\begin{itemize}
\item
  \textbf{Choose estimation methodology and input data based on the
  specifics of the research question.} As we have stressed, the causal
  structure of the research setting determines whether BISG should be
  used with the weighting or BIRDiE estimators. Research on populations
  which are very different from the general U.S. population may also
  benefit from the use of alternative or additional data on race and
  geography specific to that population. The choice of whether to use
  state-, county-, ZIP-, tract-, or block-level data likewise depends on
  aspects of the data under study, and the scale of geographic variation
  in the outcome variable and quantities of interest.
\item
  \textbf{Focus on the calibration, not the predictive accuracy, of BISG
  probabilities.} Traditionally, BISG-type methods have been evaluated
  by their predictive accuracy, measured by the mean agreement between
  the thresholded racial categories and ground truth, or the AUROC of
  the BISG probabilities. We find that, especially in cases with
  abundant individual-level data, far more important than maximizing
  predictive accuracy is ensuring that the BISG probabilities are
  properly calibrated. Accurate but biased racial prediction will lead
  to bias in downstream estimates, regardless of which disparity
  estimator is used. To evaluate probabilistic calibration in validation
  settings, we recommend visual diagnostics like binned residual plots,
  as well as numerical summaries like the logarithmic score.
\item
  \textbf{Decide whether additional covariates need to be collected.}
  Additional covariates can make the BISG and BIRDiE assumptions more
  plausible, and can increase the accuracy of BISG probabilities and the
  precision of BIRDiE estimates. We recommend prioritizing covariates
  which are highly predictive of both outcome and race. However,
  including all available covariates without considering their effect on
  the various BISG and BIRDiE assumptions is likely to cause problems.
  In particular, if a covariate's distribution against both race and
  geography is not known, avoid making unrealistic independence
  assumptions that are required for the inclusion of the covariate in
  the BISG model. Rather, follow the approach outlined in
  Section~\ref{sec-addlcov}. Covariate choice should be driven by
  substantive considerations, not convenience.
\item
  \textbf{Unless computational limits are severe, use the mixed-effects
  BIRDiE model with group-level covariates.} The mixed-effects model
  uses the data itself to determine how much to pool disparity estimates
  across different geographies. This avoids the dual pitfalls of
  Simpson's paradox (a risk if no geographical information is used) and
  over-regularization (caused by the prior if no pooling is performed).
  When using the mixed-effects model, group-level covariates are likely
  to improve both overall and especially small-area accuracy, and they
  are easy to collect from public sources. For example, a good default
  for a ZIP-code level model applied to a political outcome variable
  would be to include the percentages of the major racial groups in the
  ZIP code, as well as the ZIP code's income level, population density,
  and partisanship, in the model.
\item
  \textbf{Perform sensitivity analyses.} Both the BISG and BIRDiE models
  use priors, which should be perturbed to examine the sensitivity of
  the results to prior selection. The sensitivity of BIRDiE to its key
  identifying assumption should also be assessed, at minimum using the
  auxiliary covariate approach demonstrated in
  Section~\ref{sec-nc-sens}.
\item
  \textbf{Consider validating estimates with a small-scale survey.} Even
  with administrative microdata observations in the millions, there is
  no substitute for a high-quality random sample of individuals for whom
  race can be observed. Such a sample can be used to validate the
  various assumptions made by BISG and BIRDiE, as well as providing a
  sanity check against BIRDiE disparity estimates. Future improvements
  to BIRDiE could also directly integrate survey data into the workflow.
\end{itemize}

\subsection{Ethical Considerations}\label{ethical-considerations}

As researchers have increasingly applied racial prediction and
imputation methods to administrative records, including many publicly
available datasets, there has been growing concern about ethical and
privacy considerations around performing such predictions. While some
scholars do not view racial prediction methods as privacy risks
\citep[``Statistical Inference is Not a Privacy
Violation'']{buncomment}, we believe it is important for researchers to
consider the implications of their use of racial prediction methods
\citep{kenny2023comment}.

Recently, \citet{lee2023ethical} studied public perception of these
ethical considerations through a large factorial survey experiment that
asked participants if they viewed a hypothetical study as ethically
permissible based on three study factors. They find that studies which
focused on accurate estimation of racial disparities were viewed as more
ethically permissible than studies that overestimated or underestimated
the size of disparities. This tracks with arguments made by many
scholars of progressive tax policy
\citep{brown2022whiteness, bearer2019should}.

Compared to previous approaches to racial disparity estimation, which
emphasized maximizing accuracy of individual racial predictions to
minimize measurement error, BIRDiE focuses on the accuracy of estimating
racial disparities. In fact, as our validation study demonstrated,
different sets of individual race predictions of varying individual
accuracies (county-based versus block-based BISG), when used with
BIRDiE, produced similar estimates of racial disparities. To the extent
that it allows researchers to focus on calibration of racial prediction
rather than maximizing individual predictive accuracy, BIRDiE may
alleviate some privacy concerns and reduce incentives to collect and
link more personal data in an attempt to further increase accuracy. We
view this as a welcome change, consistent with the public's preference
for focusing on accurate disparity information.

However, BIRDiE does allow for the creation of improved BISG
probabilities that incorporate the outcome variable and thus can be more
accurate as well as better calibrated. While this accuracy gain is a
purely statistical phenomenon based on variables already present in the
individual dataset under study, researchers should be cautious, for
example, in releasing these improved racial predictions publicly. We
urge practitioners to view racial prediction tools as a means to the end
of accurate disparity estimation, and treat the intermediate
probabilistic predictions with appropriate care.

\subsection{Future Research}\label{future-research}

Much work remains to be done in accurately and reliably measuring racial
disparities. First, the BISG probabilities themselves can be improved.
Approaches to doing so include \citet{imai2022addressing}, which
accounts for some Census measurement error while remaining
computationally tractable, and \citet{greengard2023bisg}, which rakes
BISG margins to improve calibration. More work on identifying and
producing data sources which can be used as BISG inputs, rather than
relying solely on Census tabulations, will also pay dividends for BISG
quality.

Beyond the BISG probabilities, further empirical analyses could
determine useful additional variables to condition on, which could allow
analysts to weaken the required assumption. Additional study, possibly
combined with qualitative research, could identify causal pathways that
might threaten the assumptions that BISG and BIRDiE rely on, and develop
data sources, like our auxiliary 1930 Census data, that could be used to
evaluate the plausibility of those assumptions in real-world analyses,
and their effect on numerical conclusions. Finally, the BIRDiE model
could also be extended to directly model more complex types of outcome
variables.

\hypertarget{refs}{}

\begin{CSLReferences}{0}{0}\end{CSLReferences}

\appendix

\renewcommand\thefigure{\thesection\arabic{figure}}

\renewcommand\thetable{\thesection\arabic{table}}

\setcounter{figure}{0}

\setcounter{table}{0}

\section{Proofs of Propositions}\label{proofs-of-propositions}

\corunder*

\begin{proof}
For notational simplicity, let $\mu_r=\Pr(Y=y\mid R=r)$ and $\hat\mu_r=\hat\mu^{(\text{wtd})}_{Y|R}(y\mid r)$ for $r\in\{0,1\}$.
Since $\Pr(Y=y\mid R=1, G=g, X=x, S=s)>\Pr(Y=y\mid R=0, G=g, X=x, S=s)$ for all $g\in\cG$, $x\in\cX$, 
necessarily $\E[\Cov(\ind\{Y=y\}, \ind\{R=1\}\mid G,X,S)]>0$ and $\E[\Cov(\ind\{Y=y\}, \ind\{R=1\}\mid G,X,S)]<0$.
We note that the corollary could be stated under this more general condition, but was not for expositional clarity.
Thus by Theorem \ref{thm:wt-bias}, $\hat\mu_1-\mu_r<0$ and $\hat\mu_1-\mu_r>0$.
Then
\begin{align*}
    \hat\mu_1-\hat\mu_0
    &= \hat\mu_1-\mu_1+\mu_1-\mu_0+\mu_0-\hat\mu_0 \\
    &= (\hat\mu_1-\mu_1)-(\hat\mu_0-\mu_0)+(\mu_1-\mu_0) \\
    &< \mu_1-\mu_0,
\end{align*}
as claimed.
\end{proof}

\thmid*

\begin{proof}
Applying the law of total probability and our conditional independence relation $S\indep Y\mid R,G,X$, we have,
for all $y\in\cY$, $g\in\cG$, $x\in\cX$, and $s\in\cS$,
\begin{align*}
    \Pr(Y=y\mid &G=g, X=x, S=s) \\
    &= \sum_{r\in\cR} \Pr(Y=y\mid R=r, G=g, X=x, S=s)\Pr(R=r\mid G=g, X=x, S=s) \\
    &= \sum_{r\in\cR} \Pr(Y=y\mid R=r, G=g, X=x)\Pr(R=r\mid G=g, X=x, S=s).
\end{align*}
The left-hand side is estimable from the data and the rightmost term $\Pr(R=r\mid G=g, X=x, S=s)$ is assumed known.
So for each $y\in\cY$, $g\in\cG$, and $x\in\cX$, this relation is a linear system in unknown parameters $\Pr(Y=y\mid R=r, G=g, X=x)$.
These parameters are identified if and only if this system has a unique solution, i.e. if the coefficient matrix $\vb P$ has rank $|\cR|$ and so does the augmented matrix $\mqty(\vb P&\vb b)$.
\end{proof}

\thmolsunb*

\begin{proof}
Fix $y\in\cY$ and define $m_{gxr} = \E[\ind\{Y=y\}\mid R=r, G=g, X=x)]$.
Then under Assumptions \ref{a-ci-sg}, \ref{a-acc}, and \ref{a-ci-ys},
\begin{align*}
    \E[\ind\{Y=y\}&\mid G=g, X=x, S=s] \\
    &= \sum_{r\in\cR} \E[\ind\{Y=y\}\mid R=r, G=g,X=x)]\Pr(R=r\mid G=g,X=x,S=s) \\
    &= \sum_{r\in\cR} m_{gxr} \hat p_r,
\end{align*}
where as in the main text $\vb{\hat p}$ is the (random) vector of BISG probabilities.
In fact, since the right-hand side depends on $S$  only through $\hat{\vb p}$, we have \[
    \E[\ind\{Y=y\}\mid G=g, X=x, \hat{\vb p}] 
    = \sum_{r\in\cR} m_{gxr} \hat p_r.
\]
So the conditional expectation of $\ind\{Y=y\}$ given $X$, $G$, and the BISG probabilities $\vb{\hat p}$ is linear in those probabilities, with coefficients $m_{gxr}$.
Consequently, the OLS estimate $\hat{\vb*\mu}^{(\text{ols})}_{Y\mid RGX}(y\mid\cdot,g,x)$ will be unbiased for $m_{gxr}$, by the standard results.

Now, we can expand $\Pr(Y=y\mid R=r)$ as
\begin{align*}
    \Pr(Y=y\mid R=r) 
    &= \sum_{x\in\cX,g\in\cG} \Pr(Y=y\mid R=r, G=g, X=x)\Pr(G=g, X=x\mid R=r) \\
    &= \sum_{x\in\cX,g\in\cG} m_{gxr} q_{gx|r}.
\end{align*}
Since $\hat{\vb*\mu}^{(\text{ols})}_{Y\mid RGX}(y\mid \cdot, g,x)$ is unbiased for $m_{gxr}$,  by the linearity of expectation the poststratified estimator $\hat{\mu}^{(\text{p-ols})}_{Y\mid R}(y\mid r)$ is unbiased for $\Pr(Y=y\mid R=r)$.
\end{proof}

\thmwtdols*

\begin{proof}
Fix a $y\in\cY$, $g\in\cG$ and $x\in\cX$.
The weighting estimator of $\Pr(Y=y\mid R=r)$ within the set of individuals with $G_i=g$ and $X_i=x$ may be written $$
    \hat\mu^{(\text{wtd})}_{Y|RGX}(y\mid r,g,x) 
    = \frac{\hat{\vb P}_{\cI(xg) r}^\top \ind\{\vb Y_{\cI(xg)}=y\}}{\hat{\vb P}_{\cI(xg) r}^\top \vb 1}
    = \frac{\norm{\proj_{\hat{\vb P}_{\cI(xg) r}}(\ind\{\vb Y_{\cI(xg)}=y\})}}{
        \norm{\proj_{\hat{\vb P}_{\cI(xg) r}}(\vb 1)}},
$$ the ratio of the projected length of the outcome vector $\ind\{\vb Y=y\}$ and the constant vector $\vb 1$ onto $\hat{\vb P}_{\cdot r}$.
We can write the OLS estimator as $$
    \hat{\vb*\mu}^{(\text{ols})}_{Y|R} = (\hat{\vb P}_{\cI(xg)}^\top\hat{\vb P}_{\cI(xg)})^{-1}
            \hat{\vb P}_{\cI(xg)}^\top \ind\{{\vb Y}_{\cI(xg)}=y\}
    = \mathrm{coord}_{\hat{\vb P}_{\cI(xg)}}(\proj_{\hat{\vb P}_{\cI(xg)}}(\ind\{{\vb Y}_{\cI(xg)}=y\}),
$$ where $\mathrm{coord}_{\hat{\vb P}_{\cI(xg)}}$ is the function that returns the coordinates of its input vector in the $\hat{\vb P}_{\cI(xg)}$ basis (by assumption $\hat{\vb P}_{\cI(xg)}$ has rank $|\cR|$ and so its columns are linearly independent).
To make the comparison even easier, notice that we can break the projection $\proj_{\hat{\vb P}_{\cI(xg) r}}$ into two steps, writing it instead as $\proj_{\hat{\vb P}_{\cI(xg) r}} = \proj_{\hat{\vb P}_{\cI(xg) r}} \circ \proj_{\hat{\vb P}}$.
Letting $\vb Y_{\proj}=\proj_{\hat{\vb P}_{\cI(xg)}}(\ind\{{\vb Y}_{\cI(xg)}=y\})$, then, we can rewrite our estimators as \[
    \hat\mu^{(\text{wtd})}_{Y|R}(y\mid r) 
    = \frac{\norm{\proj_{\hat{\vb P}_{\cI(xg) r}}(\vb Y_\proj)}}{
        \norm{\proj_{\hat{\vb P}_{\cI(xg) r}}(\vb 1)}}  \qand
    \hat{\vb\mu}^{(\text{ols})}_{Y|R}(y\mid r)
    = \mathrm{coord}_{\hat{\vb P}_{\cI(xg)}}(\vb Y_\proj)_r.
\]

Now, since the individual BISG probabilities are nonnegative and sum to 1, a pair $j,k\in\cR$ of races has perfectly discriminating BISG probabilities if and only if the corresponding columns of $\hat{\vb P}$ are orthogonal, i.e., $\hat{\vb P}_{\cI(xg) j}^\top\hat{\vb P}_{\cI(xg) k}=I$.
Begin by writing $\vb Y_\proj$ in terms of the $\hat{\vb P}_{\cI(xg)}$ basis, so \[
    \vb Y_\proj = \sum_{j\in\cR} c_j\hat{\vb P}_{\cI(xg) j},
\] and thus $\hat{\vb\mu}^{(\text{ols})}_{Y|R}(y\mid j)=c_j$.
Without loss of generality, suppose the $c_j$ are numbered as $c_1\ge c_2\ge \cdots\ge c_{|\cR|}$.
We can also expand $\vb 1$ in the same basis. Since the individual probabilities must sum to one, in fact we have \(
    \vb 1 = \sum_{j\in\cR} \hat{\vb P}_{\cI(xg) j}.
\)

For the forward direction, we assume $\hat{\mu}^{(\text{wtd})}_{Y|R}(y\mid j)=\hat{\mu}^{(\text{ols})}_{Y|R}(y\mid j)=c_j$;
multiplying out the denominator of the weighting estimator, we have $\hat{\vb P}_{\cI(xg) j}^\top \vb Y=c_j\hat{\vb P}_{\cI(xg) j}^\top\vb 1$ for all $j$; substituting the basis expansions of $\vb Y_\proj$ and $\vb 1$, this yields \[
    \sum_{k\in\cR} c_k \hat{\vb P}_{\cI(xg) j}^\top\hat{\vb P}_{\cI(xg) k}
    = \sum_{k\in\cR} c_j \hat{\vb P}_{\cI(xg) j}^\top\hat{\vb P}_{\cI(xg) k}, \qq{so}
    \sum_{k\in\cR} (c_j-c_k) \hat{\vb P}_{\cI(xg) j}^\top\hat{\vb P}_{\cI(xg) k}=0.
\] Now fix $j\in J_1=\argmax_j c_j$; this relation still holds, but now every term in the sum is nonnegative and in particular $c_j>c_k$ for all $k\not\in J_1$.
Therefore we must have $\hat{\vb P}_{\cI(xg) j}^\top\hat{\vb P}_{\cI(xg) k}=0$ for all $k\not\in J_1$.
Then fix $j\in J_2=\argmax_{j\not\in J_1} c_j$; since $\hat{\vb P}_{\cI(xg) j}^\top\hat{\vb P}_{\cI(xg) l}=0$ for all $l\in J_1$, every term in the sum is still nonnegative and in particular $c_j>c_k$ for all $k\not\in J_1\cup J_2$.
Therefore we must have $\hat{\vb P}_{\cI(xg) j}^\top\hat{\vb P}_{\cI(xg) k}=0$ for all $k\not\in J_1\cup J_2$.
Proceeding this way through all sets of common values in the $c_j$ we find that for all $j,k\in\cR$, either $c_j=c_k$ or  $\hat{\vb P}_{\cI(xg) j}^\top\hat{\vb P}_{\cI(xg) k}=0$.

For the reverse direction, fix $j\in\cR$ and let $J=\{k\in\cR:c_k=c_j\}$, so that by assumption $\hat{\vb P}_{\cI(xg) j}^\top\hat{\vb P}_{\cI(xg) k}=0$ for all $k\not\in J$.
Then by the above basis expansion, $\hat{\mu}^{(\text{ols})}_{Y|RGX}(y\mid j)=c_j$, and \[
    \hat{\mu}^{(\text{wtd})}_{Y|R}(y\mid j) 
    = \frac{\sum_{k\in\cR} c_k \hat{\vb P}_{\cI(xg) j}^\top\hat{\vb P}_{\cI(xg)k}}{
        \sum_{k\in\cR}\hat{\vb P}_{\cI(xg) j}^\top\hat{\vb P}_{\cI(xg) k}}
    = \frac{c_j \sum_{k\in J} \hat{\vb P}_{\cI(xg) j}^\top\hat{\vb P}_{\cI(xg) k}}{
        \sum_{k\in J}\hat{\vb P}_{\cI(xg) j}^\top\hat{\vb P}_{\cI(xg) k}}
    = c_j = \hat{\mu}^{(\text{ols})}_{Y|R}(y\mid j). \qedhere
\]
\end{proof}

\thmidrel*

\begin{proof}
The argument is identical to the proof of Theorem \ref{thm:id}.

Applying the law of total probability and our conditional independence relation $S\indep Y\mid f(S),R,G,X$, we have,
for all $y\in\cY$, $g\in\cG$, $x\in\cX$, and $s\in\cS$,
\begin{align*}
    \Pr(Y=y\mid G=g, X=x, S=s)
    &= \sum_{r\in\cR} \Pr(Y=y\mid R=r, f(S)=f(s), G=g, X=x, S=s)\\
    &\qquad\times\Pr(R=r\mid G=g, X=x, S=s) \\
    &= \sum_{r\in\cR} \Pr(Y=y\mid R=r, f(S)=f(s), G=g, X=x)\\
    &\qquad\times\Pr(R=r\mid G=g, X=x, S=s).
\end{align*}
The left-hand side is estimable from the data and the rightmost term $\Pr(R=r\mid G=g, X=x, S=s)$ is assumed known.
So for each $y\in\cY$, $z\in f(\cS)$, $g\in\cG$, and $x\in\cX$, this relation is a linear system in unknown parameters $\Pr(Y=y\mid R=r, f(S)=s, G=g, X=x)$.
These parameters are identified if and only if this system has a unique solution, i.e. if the coefficient matrix $\vb P$ has rank $|\cR|$ and so does the augmented matrix $\mqty(\vb P&\vb b)$.
\end{proof}

\section{Additional Small-area Accuracy Evaluation}\label{sec-app-small}

We evaluate the small-area estimates with two additional measures.
First, we calculate the root-mean-square error (RMSE) of the estimated
conditional probabilities by race within each geographic area, and then
average this across all geographic areas. This captures the overall
accuracy of the estimates. Second, to measure how well each method
captures relative differences between geographic areas, we calculate the
correlation between the estimated and true conditional probabilities
across all geographic areas by race. As in the main text, we remove
area-race cells with fewer than 5 voters. A set of estimates which
uniformly underestimates the proportion of Black voters which are
registered Democrats, but which otherwise correctly orders geographic
areas according to their proportion of Black Democrats, will score high
on the correlation measure but also higher in RMSE.

Figure~\ref{fig-nc-smallarea-app} summarizes our results, which closely
track the findings of Section~\ref{sec-small}. The BIRDiE models are
more accurate at all geographic levels and for both Black and White
voters. The weighting estimator performs the worst of all the methods.

\begin{figure}[htb]

\centering{

\includegraphics[width=1\textwidth,height=\textheight]{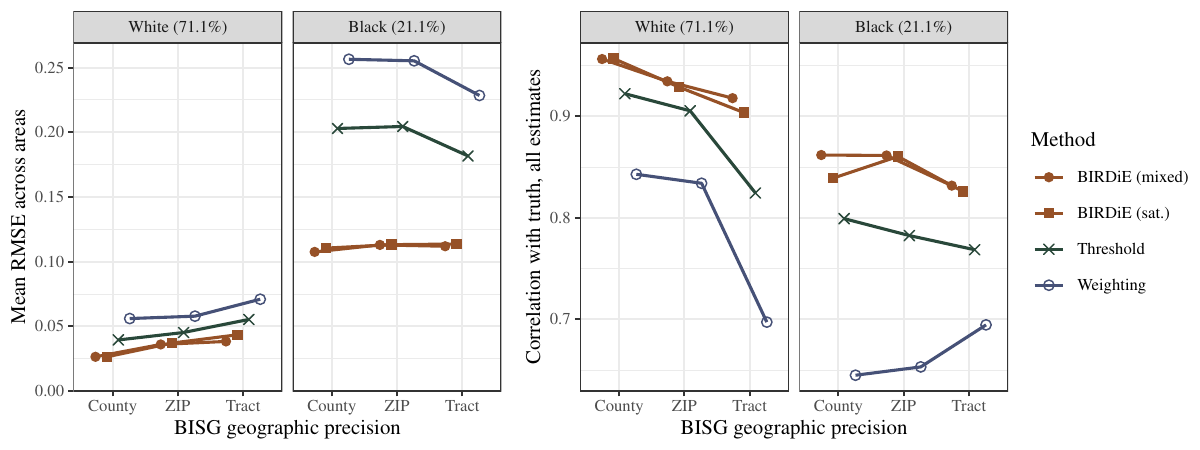}

}

\caption[Alternate measures of accuracy of small-area estimates by
race]{\label{fig-nc-smallarea-app}Accuracy of small-area estimates by
race, as measureed by root-mean-square error (RMSE; lower is better) and
the correlation between the estimates and ground truth (higher is
better).}

\end{figure}%

\section{Sensitivity Analysis}\label{sec-app-sens}

\subsection{Local sensitivity
analysis}\label{local-sensitivity-analysis}

In this section, we develop a sensitivity analysis that assesses how the
bias in BISG race probabilities affect the estimates of racial
disparities. In particular, we consider a setting where Assumptions
\ref{a-ci-sg} and \ref{a-acc} may be violated but Assumption
\ref{a-ci-ys} still holds. For example, consider the existence of
unobserved confounder that affects some or all of the variables except
the outcome, i.e., \((R,S,G,X)\). This leads to the violation of
Assumption \ref{a-ci-sg}, but Assumption \ref{a-ci-ys} continues to be
satisfied so long as such unobserved confounder does not affect the
outcome. Unfortunately, even inaccurate BISG predictions can still lead
to biased estimates of racial disparities.

Specifically, if either the Census data are inaccurate, or the
conditional independence relation does not hold
\(S\notindep G,X\mid R\), then the BISG predictions \(\hat{\vb P}\) will
differ from the ``true'' individual race probabilities
\(\vb P^*=\Pr(R\mid G, X, S)\). Our goal is to quantify how an error in
these probabilities \(\vb P^* - \hat{\vb P}\) shifts the posterior and
hence the estimates of racial disparities.

Denote by \(\pi_{\vb*\delta}\) the posterior constructed using the
error-corrected BISG race probabilities \(\hat{\vb P}_i+\vb*\delta_i\)
as the input probabilities for the model (see Equation
\eqref{eq:posterior}), where \(\pi_{\vb*\delta^*}\) is the true
posterior with \(\vb*\delta_i^* \coloneq \vb P^*_i - \hat{\vb P}_i\).
Estimating how \(\pi\) and \(\pi_{\vb*\delta}\) differ in general is
difficult, but we focus on the settings where \(\vb*\delta\) is small
enough to make a linear approximation appropriate. In sum, we aim to
quantify how the small error in BISG probabilities can alter the
estimates of racial disparities.

For clarity, in this section we will use \(\vb*\theta_{r G_iX_iY_i}\) to
denote the model parameter or function thereof that represents
\(\pi(Y_i\mid R_i=r, G_i, X_i)\). This mirrors the notation of most of
the specific models discussed in Section~\ref{sec-birdie} above. Then
define the following perturbation weight, which represents the ratio of
posterior based on the biased and error-corrected BISG race
probabilities: \[
    w(\Theta,\vb*\delta^*) \coloneq 
    \prod_{i=1}^N 
    \qty(1 + \frac{\vb*\theta_{\cdot G_iX_iY_i}^\top \vb*\delta_i^*}{
    \vb*\theta_{\cdot G_iX_iY_i}^\top \hat{\vb P}_i})
    \propto \frac{\pi_{\vb*\delta^*}(\Theta\mid\vb Y,\vb G,\vb X,\vb S)}{\pi(\Theta\mid\vb Y,\vb G,\vb X,\vb S)},
\] Then, using a local linear approximation, we write the bias for a
particular quantity of interest \(g(\Theta)\) as \begin{align*}
    \E_{\pi_{\vb*\delta^*}}[g(\Theta)] - \E_\pi[g(\Theta)]
    &= \eval{\dv{\E_{\pi_{\vb*\delta}}[g(\Theta)]}{\vb*\delta}}_{\vb*\delta=0}^\top
        \vb*\delta^* + o(\norm{\vb*\delta^*}) \\
    &= \Cov_\pi\qty(g(\Theta), 
        \eval{\dv{\log w(\Theta,\vb*\delta)}{\vb*\delta}}_{\vb*\delta=0})^\top
        \vb*\delta^* + o(\norm{\vb*\delta^*}),
        \numberthis\label{eq:bias-dv}
\end{align*} where the second equality is due to Theorem 2.1 of
\citetext{\citealp{giordano2018cov}; \citealp[see also the idea of
\emph{local sensitivity} from][]{gustafson1996local}}.

With this representation, we can bound the total error in
\(\E_\pi[g(\Theta)]\) for sufficiently small \(\vb*\delta\) as the
following theorem shows.

\begin{theorem}[Bias Bound] \label{thm:bound}
Define $\tilde\vartheta_{ir}\coloneq 
\frac{\theta_{rG_iX_iY_i}}{\theta_{\cdot G_iX_iY_i}^\top\hat{\vb P}_i}$.
Then for any input probabilities with total error $\norm{\vb*\delta^*}^2
=\sum_{i=1}^n\norm{\vb*\delta_i^*}^2\le \Delta^2$,
\begin{equation} \label{eq:covbound}
    |\E_{\pi^*}[g(\Theta)] - \E_\pi[g(\Theta)]|
    \lesssim \Delta\norm{\Cov_\pi(g(\Theta), \tilde\vartheta)},
\end{equation}
as $\Delta\to 0$.
\end{theorem}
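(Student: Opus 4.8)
The plan is to combine the local linear expansion of Equation \eqref{eq:bias-dv} with an explicit evaluation of the log-derivative of the perturbation weight, and then close with a single application of Cauchy--Schwarz. First I would compute $\eval{\dv{\log w(\Theta,\vb*\delta)}{\vb*\delta}}_{\vb*\delta=0}$ directly. Since $\log w(\Theta,\vb*\delta) = \sum_{i=1}^N \log\!\qty(1 + \vb*\theta_{\cdot G_iX_iY_i}^\top\vb*\delta_i \big/ \vb*\theta_{\cdot G_iX_iY_i}^\top\hat{\vb P}_i)$ decouples across observations, differentiating the $i$-th summand in the block $\vb*\delta_i$ and evaluating at $\vb*\delta=0$ (where the inner ratio vanishes, so the chain-rule prefactor is $1$) leaves $\vb*\theta_{\cdot G_iX_iY_i}\big/(\vb*\theta_{\cdot G_iX_iY_i}^\top\hat{\vb P}_i)$, whose $r$-th coordinate is precisely $\tilde\vartheta_{ir}$. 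Hence the concatenated gradient over all $(i,r)$ pairs is exactly the vector $\tilde\vartheta$ defined in the statement.

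Substituting this identification into Equation \eqref{eq:bias-dv} gives
\[
    \E_{\pi^*}[g(\Theta)] - \E_\pi[g(\Theta)]
    = \Cov_\pi(g(\Theta), \tilde\vartheta)^\top\vb*\delta^* + o(\norm{\vb*\delta^*}),
\]
so the leading term is an inner product of the covariance vector with the error vector $\vb*\delta^*$. Applying Cauchy--Schwarz bounds its magnitude by $\norm{\Cov_\pi(g(\Theta),\tilde\vartheta)}\,\norm{\vb*\delta^*}$, and the hypothesis $\norm{\vb*\delta^*}\le\Delta$ replaces $\norm{\vb*\delta^*}$ by $\Delta$. Since $\norm{\vb*\delta^*}\le\Delta$ also forces the remainder to be $o(\Delta)$, that term is absorbed into the $\lesssim$ relation as $\Delta\to 0$, yielding Equation \eqref{eq:covbound}.

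I expect the computation to be essentially routine; the only points requiring care are bookkeeping the block structure of $\vb*\delta$ so that the gradient is matched coordinate-by-coordinate with $\tilde\vartheta$, and verifying that the approximation error is genuinely $o(\Delta)$ uniformly over admissible perturbations with $\norm{\vb*\delta^*}\le\Delta$, which is what licenses folding it into the asymptotic $\lesssim$. The substantive conceptual step --- the covariance representation of the directional derivative --- has already been supplied by Theorem~2.1 of \citet{giordano2018cov} in Equation \eqref{eq:bias-dv}, so this bound is effectively a corollary obtained by evaluating that derivative and estimating the resulting linear functional.
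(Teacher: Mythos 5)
Your proposal matches the paper's proof: both evaluate the gradient of $\log w(\Theta,\vb*\delta)$ at $\vb*\delta=0$ to identify it with $\tilde\vartheta$, plug this into Equation \eqref{eq:bias-dv}, and bound the resulting linear functional over $\norm{\vb*\delta^*}\le\Delta$ (the paper phrases the Cauchy--Schwarz step as the worst-case $\vb*\delta$ pointing along the gradient, which is the same thing). Your added remark about uniformity of the $o(\norm{\vb*\delta^*})$ remainder is a reasonable point of care that the paper leaves implicit, but the argument is otherwise essentially identical.
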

\begin{proof}
This is immediate from \eqref{eq:bias-dv} once we compute 
\begin{align*}
    \dv{\log w(\Theta,\vb*\delta)}{\delta_{ir}} 
    &= \dv{\delta_{ir}} \sum_{i=1}^N 
        \log(1+\frac{\vb*\theta_{\cdot G_iX_iY_i}^\top \vb*\delta_i}{
        \vb*\theta_{\cdot G_iX_iY_i}^\top \hat{\vb P}_i}) \\
    &= \dv{\delta_{ir}}
        \log(1+\frac{\vb*\theta_{\cdot G_iX_iY_i}^\top \vb*\delta_i}{
        \vb*\theta_{\cdot G_iX_iY_i}^\top \hat{\vb P}_i}) \\
    &= \frac{1}{1+\frac{\vb*\theta_{\cdot G_iX_iY_i}^\top \vb*\delta_i}{
        \vb*\theta_{\cdot G_iX_iY_i}^\top \hat{\vb P}_i}} \times
        \frac{\vb*\theta_{rG_iX_iY_i}}{
        \vb*\theta_{\cdot G_iX_iY_i}^\top \hat{\vb P}_i} \\
    &= \frac{\vb*\theta_{rG_iX_iY_i}}{
        \vb*\theta_{\cdot G_iX_iY_i}^\top(\hat{\vb P}_i+\vb*\delta_i)}
\end{align*}
and evaluate at $\vb*\delta=0$, since the worst-case bias for a fixed total error can be obtained by having the maximum allowable $\vb*\delta$ point in the direction of the gradient of $\log w(\Theta,\vb*\delta)$.
\end{proof}

The theorem shows that once researchers choose the amount of total error
\(\Delta\), then the bound on the shift in a quantity of interest can be
computed readily from posterior draws. It is important to note that both
\(\vb*\delta^*\) and \(\Cov_\pi(g(\Theta), \tilde\vartheta)\) are
vectors whose dimension depends on the sample size \(N\). Thus, all else
being equal, their norms will each grow as \(\sqrt{N}\). However, each
entry \(\Cov_\pi(g(\Theta), \tilde\vartheta_{ir})\) will tend to shrink
as \(N\) increases, since each observation exerts less leverage on the
overall posterior. Thus the overall impact of the sample size on the
bound in Equation \eqref{eq:covbound} may depend on specific features of
the data. In particular, and as should be expected, the error is not
guaranteed to vanish as \(N\to\infty\). Practitioners should evaluate
Equation \eqref{eq:covbound} under a range of plausible \(\Delta\) to
understand how robust their findings are to worst-case linear violations
of Assumptions \(\ref{a-ci-sg}\) and \(\ref{a-acc}\).

\subsection{OLS sensitivity analysis}\label{ols-sensitivity-analysis}

For a different understanding of the effect of a particular
\(\vb\delta\), we can derive a result on the error in conditional
probability estimates under the OLS estimator and particular
configurations of \(\vb*\delta\). Unlike Theorem \ref{thm:bound}, this
result holds across all sizes of \(\vb*\delta\), and not just
asymptotically as \(\norm{\vb*\delta}\to 0\). However, it applies to the
OLS estimator, which, while unbiased, we do not recommend in practice.
Despite this difference, we expect many of the qualitative conclusions
to hold for BIRDiE models. The effect of any particular \(\vb*\delta\)
can of course be calculated directly by re-fitting the model to new race
probabilities.

Here, we will work with a fixed \(y\in\cY\) and among the subset of
individuals with a particular \(g\in\cG\) and \(x\in\cX\). Then for
notational simplicity we let \(\hat{\vb*\mu}^{\text{(ols)}}\) be the
vector of estimates of \(\Pr(Y=y\mid R, G=g, X=x)\), and \(\vb*\mu\) the
corresponding true probabilities. Similarly, we write \(\hat{\vb P}\)
for the matrix of individual race probability estimates for the subset
of individuals with \(g\in\cG\) and \(x\in\cX\); elsewhere in the text
this would be notated \(\hat{\vb P}_{\cI(xg)}\)

\begin{prop}[OLS Bias from incorrect $\hat{\vb P}$] \label{prop:ols-bias}
Let Assumption \ref{a-ci-ys} hold.
If the OLS estimator $\hat{\vb*\mu}^{\text{(ols)}}$ is 
calculated using race probabilities $\hat{\vb P}$ which differ from the true 
probabilities $\vb P^*=\hat{\vb P}+\vb*\delta$, then its bias satisfies \[
    \E[\hat{\vb*\mu}^{\text{(ols)}}]  - \vb*\mu
    = (\hat{\vb P}^\top \hat{\vb P})^{-1}
    \hat{\vb P}^\top \vb*\delta \vb*\mu
\]
\end{prop}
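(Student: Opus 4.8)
The plan is to exploit a single structural fact: under Assumption~\ref{a:indep-ys} the conditional mean of the outcome indicator is linear in the \emph{true} race probabilities $\vb P^*$, whereas the OLS estimator projects the outcome onto the column space of the \emph{misspecified} matrix $\hat{\vb P}$. The stated bias is then exactly the image, under the OLS map, of the part of $\vb P^*\vb*\mu$ that the projection onto $\hat{\vb P}$ distorts.

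First I would fix $y\in\cY$, $g\in\cG$, and $x\in\cX$ and, following the proof of Theorem~\ref{thm:ols-unb}, condition throughout on the surnames of the individuals in the $(g,x)$ cell. This makes both $\hat{\vb P}$ and $\vb*\delta=\vb P^*-\hat{\vb P}$ fixed, since each is a deterministic function of $S$ given the Census tables. Applying the law of total probability together with $Y\indep S\mid R,G,X$---that is, reusing the identity in Equation~\eqref{eq:tprob} but with the \emph{correct} probabilities $\vb P^*$---gives, for each individual $i$ in the cell,
\[
    \E[\ind\{Y_i=y\}\mid G_i=g,X_i=x,S_i]
    = \sum_{r\in\cR}\Pr(Y=y\mid R=r,G=g,X=x)\,P^*_{ir}
    = (\vb P^*\vb*\mu)_i,
\]
where $\vb*\mu$ is the length-$|\cR|$ vector with entries $\mu_r=\Pr(Y=y\mid R=r,G=g,X=x)$. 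Stacking over $i$ and writing $\vb P^*=\hat{\vb P}+\vb*\delta$ yields $\E[\ind\{\vb Y=y\}] = (\hat{\vb P}+\vb*\delta)\vb*\mu$.

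Next I would insert this conditional mean into the OLS formula $\hat{\vb*\mu}^{\text{(ols)}}=(\hat{\vb P}^\top\hat{\vb P})^{-1}\hat{\vb P}^\top\ind\{\vb Y=y\}$ and pull the expectation inside, using that $\hat{\vb P}$ is fixed under the conditioning (and has full column rank, so $\hat{\vb P}^\top\hat{\vb P}$ is invertible):
\[
    \E[\hat{\vb*\mu}^{\text{(ols)}}]
    = (\hat{\vb P}^\top\hat{\vb P})^{-1}\hat{\vb P}^\top(\hat{\vb P}+\vb*\delta)\vb*\mu.
\]
Because $(\hat{\vb P}^\top\hat{\vb P})^{-1}\hat{\vb P}^\top\hat{\vb P}$ is the identity on $\R^{|\cR|}$, the first term collapses to $\vb*\mu$, and subtracting $\vb*\mu$ from both sides leaves the claimed expression $(\hat{\vb P}^\top\hat{\vb P})^{-1}\hat{\vb P}^\top\vb*\delta\vb*\mu$.

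The algebra here is routine; the only step requiring genuine care is the conditioning argument, which is what guarantees that $\hat{\vb P}$ may be treated as a fixed design matrix while the randomness in $\ind\{\vb Y=y\}$ is governed by $\vb P^*$ rather than $\hat{\vb P}$. It is precisely this discrepancy---projecting with $\hat{\vb P}$ while the regression coefficients live on $\vb P^*$---that makes the bias nonzero whenever $\vb*\delta\vb*\mu$ has a component in the column space of $\hat{\vb P}$, and the result holds for every $\vb*\delta$ rather than only asymptotically.
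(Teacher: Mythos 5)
Your proof is correct and follows essentially the same route as the paper's: express the conditional mean of $\ind\{\vb Y=y\}$ as $\vb P^*\vb*\mu$ via Assumption~\ref{a:indep-ys}, substitute $\vb P^*=\hat{\vb P}+\vb*\delta$ into the OLS formula, take expectations, and cancel the $(\hat{\vb P}^\top\hat{\vb P})^{-1}\hat{\vb P}^\top\hat{\vb P}$ term. Your explicit conditioning on surnames to justify treating $\hat{\vb P}$ and $\vb*\delta$ as fixed is a point the paper leaves implicit, but the argument is otherwise identical.
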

\begin{proof}
We can write the OLS estimate as \[
    \hat{\vb*\mu}^{\text{(ols)}} 
    = (\hat{\vb P}^\top \hat{\vb P})^{-1} \hat{\vb P}^\top \ind\{\vb Y=y\}.
\]
As shown in Theorem \ref{thm:ols-unb}, under Assumption \ref{a-ci-ys}, 
$\Pr(Y=y\mid S=s,G=g,X=x)$ is linear in the true $\vb P^*$.
Thus letting $\eps=\ind\{\vb Y=y\}-\Pr(Y=y\mid S=s,G=g,X=x)$, we can substitute and find
\begin{align*}
    \hat{\vb*\mu}^{\text{(ols)}} 
    &= (\hat{\vb P}^\top \hat{\vb P})^{-1} \hat{\vb P}^\top 
    \qty(\vb P^*\vb*\mu+\eps) \\
    &= (\hat{\vb P}^\top \hat{\vb P})^{-1} \hat{\vb P}^\top 
    \qty((\hat{\vb P}+\vb*\delta)\vb*\mu+\eps).
\end{align*}
Taking an expectation, since $\E[\eps]=0$ we find
\begin{align*}
    \E[\hat{\vb*\mu}^{\text{(ols)}}]
    &= (\hat{\vb P}^\top \hat{\vb P})^{-1} \hat{\vb P}^\top 
    \qty((\hat{\vb P}+\vb*\delta)\vb*\mu) \\
    &= \vb*\mu + (\hat{\vb P}^\top \hat{\vb P})^{-1} \hat{\vb P}^\top 
    \qty(\vb*\delta\vb*\mu);
\end{align*}
rearrangement yields the result.
\end{proof}

Informally, for BISG error \(\vb*\delta\) to cause problems with the OLS
estimate, two things must happen. First, within individuals it must be
``correlated'' (i.e., have nonzero inner product) with the true
conditional probabilities \(\vb*\mu\). Since \(\vb*\delta_i\) must
always sum to zero, practically, this means that positive BISG errors
must tend to occur in racial groups which have a relatively high
occurrence of outcome \(y\):
\(\Pr(Y=y\mid R=r,G=g,X-x) > \Pr(Y=y\mid G=g,X-x)\). Second, the vector
\(\vb*\delta\vb*\mu\) (where each entry measures this ``correlation''
between errors and relative frequencies) must be correlated the BISG
probabilities themselves \(\hat{\vb P}\). For example, if
\(\vb*\delta_i\vb*\mu\) is positive and tends to be larger for
individuals with a high BISG probability of being Hispanic, then the
overall OLS estimator the conditional probability of \(Y=y\) among
Hispanics will be biased upwards.

While Proposition \ref{prop:ols-bias} applies within a \((G,X)\) cell,
if the same conditions hold across all \((G,X)\) combinations, then the
overall poststratified estimator will be similarly biased.

\section{Surname Groupings for North Carolina Robustness
Analysis}\label{sec-app-surgrp}

We classify every surname in the voter file into one of nine groups,
each containing surnames from one or more of 22 surname groups that we
provide in the replication data and software. These groups are organized
mainly around different regions of the world and different waves of
immigration to the United States. To create the surname groups, each
individual in the 1930 Census data was classified into one of the 22
groups. Then among the set of individuals with each surname, the group
with the highest number of individuals relative to the whole population
was assigned to that surname. For example, while most people named
``Smith'' fall into the Anglosphere group (containing 3rd or more
generation White U.S. residents as of 1930, as well as immigrants from
the U.K., Canada, Australia, etc.), there are relatively more Smiths
among Black people than any other of the 22 groups. Thus ``Smith'' is
assigned to the Black surname group. The full code for creating the 22
surname groupings from the 1930 Census data is available in the
replication materials.

Because of the demographics of the United States, as well as limitations
of the source data there is more geographic specificity in the surname
groupings for some regions (e.g., Europe) than for others (e.g., South
America and Africa). We collapse the 22 surname groups to nine for the
robustness analysis in Section~\ref{sec-valid} based on the demographics
of North Carolina specifically and to minimize the computational burden
of performing the robust analysis.

The 50 roughly most frequent surnames in each group, along with a brief
description of the group, are listed below. We stress that for the
purposes of sensitivity analysis, the surname groups need only be
correlated with countries of origin and racial subgroups. Perfect
alignment is neither possible nor necessary.

\subparagraph{Anlglosphere and Black surname
group.}\label{anlglosphere-and-black-surname-group.}

Surnames which are relatively more prevalent among 3rd-or-more
generation White U.S. residents and Black U.S. residents in 1930.

\nopagebreak

\begin{tabular}{lllll}

1. SMITH & 11. LEWIS & 21. HALL & 31. COLLINS & 41. COX\\
2. WILLIAMS & 12. ROBINSON & 22. CAMPBELL & 32. STEWART & 42. WARD\\
3. BROWN & 13. WALKER & 23. MITCHELL & 33. MORRIS & 43. RICHARDSON\\
4. JONES & 14. ALLEN & 24. CARTER & 34. COOK & 44. WATSON\\
5. DAVIS & 15. WRIGHT & 25. ROBERTS & 35. ROGERS & 45. BROOKS\\
6. TAYLOR & 16. SCOTT & 26. PHILLIPS & 36. MORGAN & 46. WOOD\\
7. MOORE & 17. HILL & 27. EVANS & 37. COOPER & 47. JAMES\\
8. JACKSON & 18. GREEN & 28. TURNER & 38. BAILEY & 48. BENNETT\\
9. WHITE & 19. ADAMS & 29. PARKER & 39. REED & 49. GRAY\\
10. CLARK & 20. BAKER & 30. EDWARDS & 40. HOWARD & 50. HUGHES\\

\end{tabular}

\subparagraph{First wave European immigration surname
group.}\label{first-wave-european-immigration-surname-group.}

Surnames associated with German, Nordic, and Irish immigrants.

\nopagebreak

\begin{tabular}{lllll}

1. JOHNSON & 11. BURNS & 21. CARROLL & 31. SCHULTZ & 41. HIGGINS\\
2. ANDERSON & 12. OLSON & 22. RILEY & 32. PEARSON & 42. OCONNOR\\
3. NELSON & 13. WAGNER & 23. BURKE & 33. BARRETT & 43. QUINN\\
4. MURPHY & 14. MEYER & 24. LARSON & 34. BECK & 44. SWANSON\\
5. PETERSON & 15. SCHMIDT & 25. CARLSON & 35. POWERS & 45. FITZGERALD\\
6. KELLY & 16. RYAN & 26. OBRIEN & 36. LEONARD & 46. CHRISTENSEN\\
7. SULLIVAN & 17. DUNN & 27. LYNCH & 37. BENSON & 47. MANNING\\
8. MURRAY & 18. KELLEY & 28. HANSON & 38. LYONS & 48. MCLAUGHLIN\\
9. MCDONALD & 19. HANSEN & 29. WEBER & 39. MCCARTHY & 49. DOYLE\\
10. KENNEDY & 20. CUNNINGHAM & 30. WALSH & 40. ERICKSON & 50. BRADY\\

\end{tabular}

\subparagraph{Second wave European immigration surname
group.}\label{second-wave-european-immigration-surname-group.}

Surnames associated with Eastern European, Italian, Jewish, Russian,
Greek, and other Southern European immigrants.

\nopagebreak

\begin{tabular}{lllll}

1. FOX & 11. ZIMMERMAN & 21. KLINE & 31. KATZ & 41. NICHOLAS\\
2. NICHOLS & 12. KLEIN & 22. BERGER & 32. MARINO & 42. ROSENBERG\\
3. HOFFMAN & 13. GROSS & 23. STEIN & 33. BRUNO & 43. ROSSI\\
4. NEWMAN & 14. GOODMAN & 24. RAYMOND & 34. MOSER & 44. SINGER\\
5. SCHNEIDER & 15. SHERMAN & 25. FRIEDMAN & 35. GOLDSTEIN & 45. ABRAMS\\
6. KELLER & 16. WOLF & 26. LEVY & 36. GOLDBERG & 46. ACKERMAN\\
7. GREGORY & 17. KRAMER & 27. NOVAK & 37. KAPLAN & 47. HELLER\\
8. SCHWARTZ & 18. NICHOLSON & 28. KAUFMAN & 38. KESSLER & 48. STERN\\
9. COHEN & 19. WEISS & 29. LEVINE & 39. ROMANO & 49. SCHAFER\\
10. BECKER & 20. RUSSO & 30. LEHMAN & 40. FINK & 50. SHAPIRO\\

\end{tabular}

\subparagraph{East Asian surname
group.}\label{east-asian-surname-group.}

Surnames associated with Chinese, Japanese, and Korean immigrants.

\nopagebreak

\begin{tabular}{lllll}

1. LEE & 11. BOWEN & 21. HORNE & 31. HAN & 41. LIANG\\
2. YOUNG & 12. LIU & 22. XIONG & 32. LAU & 42. SUN\\
3. WONG & 13. PAUL & 23. LIM & 33. MA & 43. JUNG\\
4. WANG & 14. CHAN & 24. TANG & 34. PUCKETT & 44. ZHOU\\
5. PARK & 15. TODD & 25. CHO & 35. CHIN & 45. GEE\\
6. MAY & 16. ZHANG & 26. CHENG & 36. GIL & 46. ZHAO\\
7. JOSEPH & 17. LANG & 27. KANG & 37. XU & 47. SHIN\\
8. LOWE & 18. YU & 28. LAW & 38. SONG & 48. OHARA\\
9. CHANG & 19. CHOI & 29. CRAFT & 39. KAY & 49. ZHU\\
10. LIN & 20. MOON & 30. NG & 40. STROUD & 50. YEE\\

\end{tabular}

\subparagraph{South Asian surname
group.}\label{south-asian-surname-group.}

Surnames associated with Indian and Southwest Asian immigrants.

\nopagebreak

\begin{tabular}{lllll}

1. WILSON & 11. CARR & 21. DAVID & 31. MOHAMED & 41. SAMUEL\\
2. THOMAS & 12. SINGH & 22. HOWE & 32. BOGGS & 42. SEWELL\\
3. PATEL & 13. BISHOP & 23. HAHN & 33. KUMAR & 43. HASSAN\\
4. STEVENS & 14. MANN & 24. GOOD & 34. WESTON & 44. SADLER\\
5. WOODS & 15. FRANCIS & 25. JOHN & 35. BEATTY & 45. PINTO\\
6. SHAW & 16. GILL & 26. OSBORN & 36. SWAIN & 46. MAJOR\\
7. FERGUSON & 17. YATES & 27. ABRAHAM & 37. GOMES & 47. BARNHART\\
8. RAY & 18. MARSH & 28. RODRIGUES & 38. JACOB & 48. CARMICHAEL\\
9. WILLIS & 19. ROY & 29. PEREIRA & 39. TOLBERT & 49. MUHAMMAD\\
10. GEORGE & 20. KAUR & 30. SHARMA & 40. PAYTON & 50. GUPTA\\

\end{tabular}

\subparagraph{Southeast Asian and Pacific surname
group.}\label{southeast-asian-and-pacific-surname-group.}

Surnames associated with Southeast Asian and Pacific Islander
immigrants, including Vietnamese and Filipino immigrants.

\nopagebreak

\begin{tabular}{lllll}

1. MILLER & 11. SILVA & 21. HUANG & 31. PHAN & 41. HOANG\\
2. MARTIN & 12. SANTOS & 22. WU & 32. VO & 42. CASH\\
3. KING & 13. GREENE & 23. ROWE & 33. VU & 43. BUI\\
4. NGUYEN & 14. LI & 24. BAUTISTA & 34. LU & 44. CHU\\
5. KIM & 15. LE & 25. HOUSTON & 35. NGO & 45. SINCLAIR\\
6. LONG & 16. YANG & 26. LAM & 36. TAN & 46. SORIANO\\
7. TRAN & 17. LITTLE & 27. HUYNH & 37. HONG & 47. ZHENG\\
8. CHEN & 18. MORAN & 28. HO & 38. DANG & 48. LESLIE\\
9. WEBB & 19. PHAM & 29. CHUNG & 39. DO & 49. ANGEL\\
10. GORDON & 20. RAMSEY & 30. TRUONG & 40. LY & 50. DUONG\\

\end{tabular}

\subparagraph{Non-Cuban Hispanic surname
group.}\label{non-cuban-hispanic-surname-group.}

Surnames associated with Mexican and Latin American immigrants, not
including Cuban immigrants, and Puerto Rican residents.

\nopagebreak

\begin{tabular}{lllll}

1. GARCIA & 11. RIVERA & 21. MENDOZA & 31. CASTRO & 41. ALVARADO\\
2. RODRIGUEZ & 12. GOMEZ & 22. RUIZ & 32. FERNANDEZ & 42. DELGADO\\
3. MARTINEZ & 13. DIAZ & 23. CASTILLO & 33. VARGAS & 43. PENA\\
4. HERNANDEZ & 14. CRUZ & 24. GONZALES & 34. GUZMAN & 44. CONTRERAS\\
5. LOPEZ & 15. REYES & 25. VASQUEZ & 35. MENDEZ & 45. SANDOVAL\\
6. PEREZ & 16. MORALES & 26. ROMERO & 36. MUNOZ & 46. GUERRERO\\
7. SANCHEZ & 17. GUTIERREZ & 27. MORENO & 37. SALAZAR & 47. RIOS\\
8. RAMIREZ & 18. ORTIZ & 28. HERRERA & 38. GARZA & 48. ESTRADA\\
9. TORRES & 19. RAMOS & 29. MEDINA & 39. SOTO & 49. ORTEGA\\
10. FLORES & 20. CHAVEZ & 30. AGUILAR & 40. VAZQUEZ & 50. NUNEZ\\

\end{tabular}

\subparagraph{Cuban surname group.}\label{cuban-surname-group.}

Surnames associated with Cuban immigrants.

\nopagebreak

\begin{tabular}{lllll}

1. GONZALEZ & 11. SUAREZ & 21. CRANE & 31. SARGENT & 41. MARRERO\\
2. ALVAREZ & 12. CONNER & 22. FRYE & 32. GORE & 42. VALDES\\
3. JIMENEZ & 13. SANTANA & 23. PARRA & 33. ZIEGLER & 43. OLIVA\\
4. BOWMAN & 14. DECKER & 24. MAYO & 34. TOMLINSON & 44. MCCLENDON\\
5. DAVIDSON & 15. SKINNER & 25. DAVIES & 35. LOWRY & 45. QUEEN\\
6. ACOSTA & 16. ABBOTT & 26. BLANCO & 36. PAGAN & 46. MCCORD\\
7. MOLINA & 17. GARRISON & 27. WITT & 37. LORD & 47. CRESPO\\
8. MIRANDA & 18. PONCE & 28. CARRASCO & 38. CARBAJAL & 48. CORNEJO\\
9. CASTANEDA & 19. PALACIOS & 29. ALONSO & 39. BETANCOURT & 49. DUMAS\\
10. BALL & 20. SLOAN & 30. HAINES & 40. PATINO & 50. BUENO\\

\end{tabular}

\subparagraph{``Other'' surname group.}\label{other-surname-group.}

Surnames not associated with one of the other categories, including
those associated with later Western European immigration, Middle Eastern
\& North African-associated surnames, Native-associated surnames and
Afro-Caribbean-associated surnames.

\nopagebreak

\begin{tabular}{lllll}

1. PERRY & 11. WELCH & 21. SIMON & 31. FRANCO & 41. MCKENZIE\\
2. HENRY & 12. DAY & 22. CUMMINGS & 32. HAMMOND & 42. BEIL\\
3. HUNT & 13. STANLEY & 23. CHANDLER & 33. CLARKE & 43. COCHRAN\\
4. ROSE & 14. HOPKINS & 24. SHARP & 34. WATERS & 44. NASH\\
5. PIERCE & 15. LAMBERT & 25. BARBER & 35. FRANK & 45. BRYAN\\
6. PETERS & 16. NORRIS & 26. GRIFFITH & 36. ANDRADE & 46. MEYERS\\
7. KNIGHT & 17. WALTERS & 27. PACHECO & 37. LLOYD & 47. CARSON\\
8. RICHARDS & 18. STEELE & 28. CROSS & 38. FRENCH & 48. WILKINSON\\
9. MORRISON & 19. BUSH & 29. GOODWIN & 39. OWEN & 49. ATKINSON\\
10. JACOBS & 20. WOLFE & 30. MULLINS & 40. CHARLES & 50. VINCENT\\

\end{tabular}

\section{Estimating mortgage rates by race}\label{sec-app-mortg}

The decennial census reports the number of homeowners by race, and the
number of homeowners with a mortgage, but does not report the number of
mortgages by race. Thus we are left to infer the mortgage-race
distribution from this marginal information. Fortunately, both
ownership-race and mortgage-ownership marginals are reported at fine
geographic levels.

We therefore produce estimates of mortgage rates by race at the ZCTA
level, then aggregate these estimates to the nation. Stratifying by ZCTA
means that any variation in mortgage rates by race that is explained by
geographic variation will be captured.

Within each ZCTA we estimate the number of mortgages by race by taking
the fraction of homeowners with a mortgage and multiplying by the number
of homeowners of each racial group. Nationwide, we find no association
between the racial composition of a ZCTA and the fraction of homeowners
with a mortgage. While not dispositive because of possible aggregation
bias, this finding nevertheless suggests that after controlling for
geography, little variation by racial group in the fraction of
homeowners with a mortgage remains.

Table \ref{tab:mortg} reports our nationwide estimates of the fraction
of each racial group with a mortgage.

\begin{table}[H]

\caption{\label{tab:mortg}Census-reported number of households, and fraction that own their home, by race, and estimated fraction that have a mortgage, by race.}
\centering
\begin{tabular}[t]{llll}
\toprule
Race & Total households & Fraction owner-occupied & Fraction with mortgage\\
\midrule
White & 82,343,859 & 72.2\% & 49.9\%\\
Black & 13,797,354 & 44.6\% & 31.7\%\\
Hispanic & 14,822,017 & 49.6\% & 33.7\%\\
Asian & 4,580,883 & 58.1\% & 44.4\%\\
Native & 758,975 & 57.2\% & 32.0\%\\
Other & 1,788,360 & 49.3\% & 35.6\%\\
\bottomrule
\end{tabular}
\end{table}


\end{document}